\newtheorem{theorem}{Theorem}
\newtheorem{assumption}{Assumption}
\newtheorem{corollary}{Corollary} 
\newtheorem{lemma}{Lemma}
\begin{document}

		\makeatletter
	\def\ps@pprintTitle{}
	\makeatother

\begin{frontmatter}

%% Title, authors and addresses

%% use the tnoteref command within \title for footnotes;
%% use the tnotetext command for theassociated footnote;
%% use the fnref command within \author or \affiliation for footnotes;
%% use the fntext command for theassociated footnote;
%% use the corref command within \author for corresponding author footnotes;
%% use the cortext command for theassociated footnote;
%% use the ead command for the email address,
%% and the form \ead[url] for the home page:
%% \title{Title\tnoteref{label1}}
%% \tnotetext[label1]{}
%% \author{Name\corref{cor1}\fnref{label2}}
%% \ead{email address}
%% \ead[url]{home page}
%% \fntext[label2]{}
%% \cortext[cor1]{}
%% \affiliation{organization={},
%%             addressline={},
%%             city={},
%%             postcode={},
%%             state={},
%%             country={}}
%% \fntext[label3]{}

\title{Fault-Tolerant Control of Steam Temperature in HRSG Superheater under Actuator Fault Using a Sliding Mode Observer and PINN} %% Article title

%% use optional labels to link authors explicitly to addresses:
%% \author[label1,label2]{}
%% \affiliation[label1]{organization={},
%%             addressline={},
%%             city={},
%%             postcode={},
%%             state={},
%%             country={}}
%%
%% \affiliation[label2]{organization={},
%%             addressline={},
%%             city={},
%%             postcode={},
%%             state={},
%%             country={}}

\author[inst1]{Mojtaba Fanoodi}
%\ead{m.fanoodi@aut.ac.ir}

\author[inst1]{Farzaneh Abdollahi}

\author[inst2]{Mahdi Aliyari Shoorehdeli}

\affiliation[inst1]{
	organization={Department of Electrical Engineering, AmirKabir University of Technology (Tehran Polythechnique)},
	city={Tehran},
	country={Iran}
}

\affiliation[inst2]{
	organization={Department of Electrical Engineering, K.N. Toosi University of Technology},
	city={Tehran},
	country={Iran}
}

%% Abstract

\begin{abstract}
	\noindent
	This paper presents a novel fault-tolerant control framework for steam temperature regulation in Heat Recovery Steam Generators (HRSGs) subject to actuator faults. Addressing the critical challenge of valve degradation in superheater spray attemperators, we propose a synergistic architecture comprising three components: (1) a Sliding Mode Observer (SMO) for estimation of unmeasured thermal states, (2) a Physics-Informed Neural Network (PINN) for estimating multiplicative actuator faults using physical laws as constraints, and (3) a one-sided Sliding Mode Controller (SMC) that adapts to the estimated faults while minimizing excessive actuation.
	
	The key innovation lies in the framework of closed-loop physics-awareness, where the PINN continuously informs both the observer and controller about fault severity while preserving thermodynamic consistency. 
	
	Rigorous uniform ultimate boundedness (UUB) is established via Lyapunov analysis under practical assumptions. Validated on real HRSG operational data, the framework demonstrates effective fault adaptation, reduced temperature overshoot, and maintains steam temperature within ±1°C of the setpoint under valve effectiveness loss.
	
	This work bridges control theory and physics-guided machine learning to deliver a practically deployable solution for power plant resilience, with extensions applicable to thermal systems subject to multiplicative faults.
\end{abstract}

%%%Research highlights
%\begin{highlights}
%
%
%\end{highlights}

%% Keywords
\begin{keyword}
	Heat Recovery Steam Generator (HRSG), 
	Fault-Tolerant Control, 
	Sliding Mode Observer (SMO), 
	Sliding Mode Control (SMC), 
	Physics-Informed Neural Network (PINN), 
	Multiplicative Actuator Fault

\end{keyword}

\end{frontmatter}

%% Add \usepackage{lineno} before \begin{document} and uncomment 
%% following line to enable line numbers
%% \linenumbers

%% main text
%%

%% Use \section commands to start a section

\section{Introduction}

\label{sec:introduction}
The growing demand for flexible operation of combined cycle power plants has intensified challenges in maintaining critical process variables under equipment degradation. Heat recovery steam generators, which harness waste heat from gas turbines, require precise steam temperature control to ensure efficiency, prevent material stress, and avoid catastrophic tube failures. Superheater sections—where the final steam temperature is controlled—are particularly susceptible to actuator faults in spray water valves. This vulnerability that is further exacerbated by the frequent load cycling inherent in modern power systems. Fig.~\ref{hrsg} illustrates a typical heat recovery steam generator system in a combined cycle configuration.

One prevalent issue encountered within HRSG systems is the failure of superheater components. These failures often arise due to severe operational conditions, including overheating and thermal cycling, which can lead to degradation of material properties \cite{latif2023failure}. Such failures not only reduce the efficiency of the HRSG but also result in significant downtime and repair costs \cite{sanaye2007transient}. Accurate modeling is essential for understanding how actuator failures propagate through the system, affecting heat transfer processes and overall system performance \cite{self2018effects}. Simulation-based studies have enabled analysis of HRSG behavior under non-standard operating conditions, providing valuable insight for improving design robustness \cite{mcconnell2023modeling}. Furthermore, data analytics and monitoring technologies now support real-time diagnostics of actuator health \cite{sun2014study, mcconnell2019multi}, enhancing fault prediction and maintenance scheduling. These insights underscore the need for actuator fault modeling in desuperheaters, particularly multiplicative faults arising from erosion, stiction, and scaling \cite{lv2019dependency, zhang2020distributed}. 

Numerous field investigations have examined failure mechanisms in superheater and attemperator components under prolonged thermal loading. Attemperators installed in superheater steam lines have been reported to be highly prone to thermal-fatigue cracking and creep-fatigue interaction, wherein microstructural analysis revealed transgranular and intergranular cracks leading to eventual rupture~\cite{mukhopadhyay2011failure}. 
Similarly, recurrent superheater tube failures have been attributed to long-term overheating caused by attemperator over-spraying, with remedial retrofitting implemented to restore tube integrity~\cite{shokouhmand2015failure}. 
A study on a 600~MW secondary superheater header showed that localized overheating due to magnetite deposition obstructed steam flow and induced tube leakage after extended service life~\cite{nurbanasari2023failure}. 
Moreover, analysis of a T22-steel superheater tube rupture revealed ``fish-mouth'' deformation and carbide depletion, consistent with short-time overheating from burner or steam flow irregularities~\cite{kochmanski2024failure}. 

Collectively, these findings demonstrate that even small deviations in spray-water regulation or thermal transients can initiate severe mechanical degradation in HRSG superheater assemblies. 
Such evidence underscores the necessity for robust control and fault-tolerant strategies to mitigate overheating and ensure long-term reliability.

\begin{figure}
	\centering
	\includegraphics[width=0.8\columnwidth]{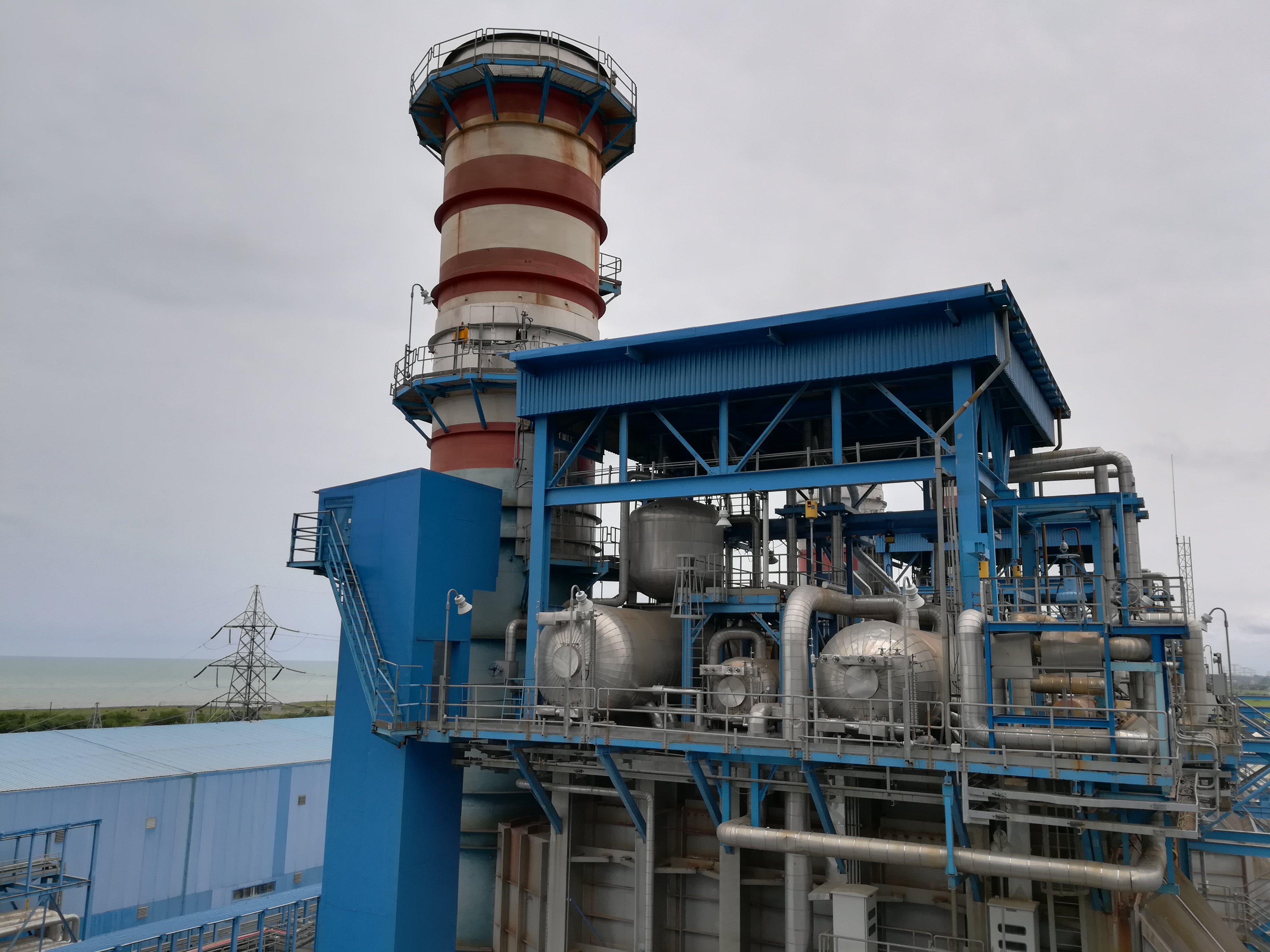}
	\caption{HRSG in Pareh-Sar Powerplant}
	\label{hrsg}
\end{figure}

Actuator faults in desuperheater valves, typically manifesting as \emph{loss of effectiveness} due to stiction, erosion, or scaling, compromise temperature control, and pose significant operational risks. These faults introduce multiplicative effects that conventional controllers cannot address. Fig.~\ref{f1} illustrates the degradation in spray performance due to loss-of-effectiveness faults in desuperheater nozzles.

\begin{figure}[h!]
	\centering
	\includegraphics[width=0.8\columnwidth]{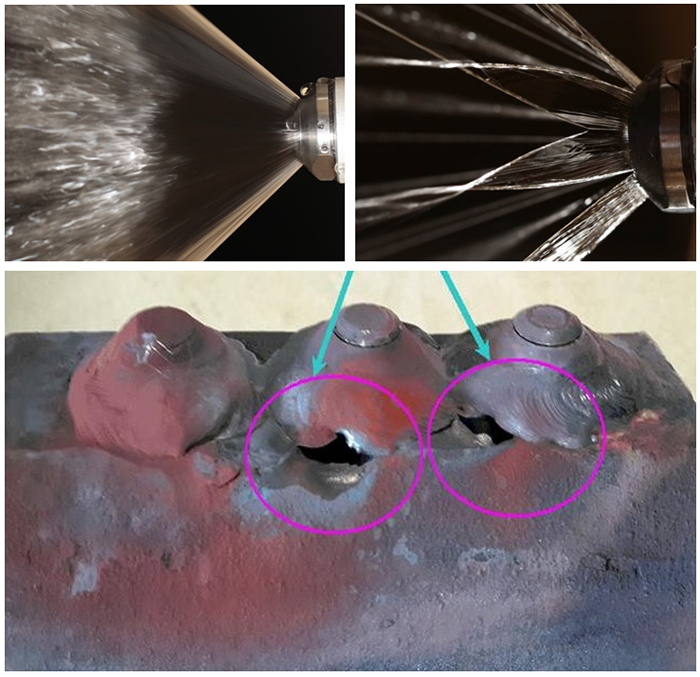}
	\caption{A fully functional desuperheater nozzle produces a finely atomized spray into the steam flow (top left). A loss-of-effectiveness fault, caused by blockage or mechanical damage, reduces spray quality (top right) or allows undispersed water to enter the steam pipe (bottom), leading to thermal shock and water hammer damage\cite{gibbons_how_2025}.}
	\label{f1}
\end{figure}

Maintenance reports from the Pareh-sar combined cycle power plant, located in northern Iran, confirm frequent occurrences of such faults. Specifically, logs show instances where the Distributed Control System (DCS) commanded a 70\% valve opening, but the valve only responded with 40\% due to degraded actuator effectiveness. Conversely, in shutdown scenarios, the controller demanded full closure, yet the valve remained partially open at around 15\%. These discrepancies resulted in steam temperatures deviating from the desired setpoints, triggering operator interventions. In several cases, operators manually adjusted the control loop setpoint to restore safe conditions. Such manual overrides not only reduced thermal efficiency in the HRSG but also negatively impacted the overall performance of the combined cycle plant. Fig.~\ref{f2} shows the types of equipment damage typically caused by prolonged desuperheater malfunction.

\begin{figure}[h!]
	\centering
	\includegraphics[width=0.8\columnwidth]{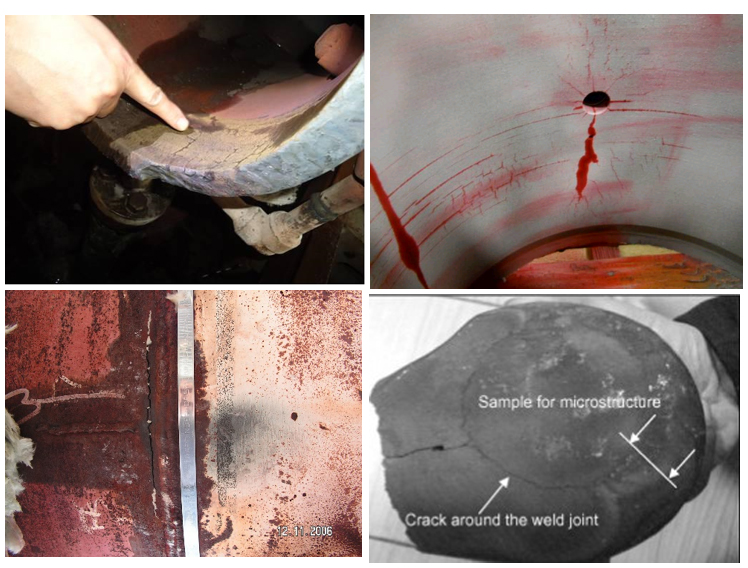}
	\caption{A loss-of-effectiveness actuator fault in the desuperheater can lead to over-temperature conditions, water hammer, or thermal shock, resulting in severe damage to piping and equipment and potentially causing extended outages for repair\cite{gibbons_how_2025,mukhopadhyay2011failure}.}
	\label{f2}
\end{figure}

Furthermore, the core challenges are compounded by (1) unmeasured states, as critical temperatures (e.g., attemperator and intermediate steam states) lack physical sensors; (2) strong nonlinearities, due to convective heat transfer dynamics with bilinear terms; and (3) real-time constraints, given the need for online fault adaptation without process interruption.

In terms of state estimation, recent advancements include distributed estimation strategies for complex systems\cite{abdin2022state, inyang2021health}. For example, Zhang et al. developed a distributed observer capable of fusing subsystem models to reconstruct internal states under fault conditions \cite{zhang2024sliding}. Similarly, Luenberger-style observers have demonstrated real-time capabilities, allowing effective tracking of internal variables even in the presence of unknown inputs\cite{ tang2023data}. 

Nonlinearities in HRSG control are another limiting factor. Conventional PI/PID control strategies often fall short due to long time delays and thermal inertia. Alternatives such as Virtual Reference Feedback Tuning (VRFT) have shown promise in thermal power unit regulation by mitigating delay-related issues \cite{zhou2023virtual}. Advanced feedback mechanisms, including backstepping control and nonlinear model-based strategies, have also been proposed for improving control fidelity in steam temperature regulation \cite{li2017superheat}. In this context, low-order ADRC schemes with phase-compensation have shown strong potential for handling the large inertia and high-order dynamics of superheated steam temperature systems, demonstrating improved robustness and disturbance rejection in practical thermal power units \cite{chen2022phase}. Similarly, modified ADRC formulations have been developed to compensate for sluggish dynamics in SST processes, offering improved tracking and disturbance rejection along with field-validated performance in large power plants \cite{wu2019superheated}. Beyond ADRC-based methods, nonlinear decoupling strategies have also been proposed for steam power plants to independently regulate pressure and steam temperature, effectively eliminating inverse-response behavior and improving robustness under large load variations \cite{alamoodi2017nonlinear}.

Regarding real-time adaptation, robust FTC schemes like fixed-time sliding-mode controllers have been proposed for waste heat systems to ensure stability despite rapid operational disturbances \cite{wang2018fixed}. Dynamic modeling approaches for combined cycle plants also demonstrate the value of real-time fault-resilient control \cite{wang2019evaluation}. Nevertheless, complex interactions—such as the pinch point phenomenon in heat exchangers—introduce sharp performance drops in thermal systems that challenge both model accuracy and fault isolation \cite{vcehil2017novel}.

Existing model-based fault-tolerant approaches struggle with unmodeled dynamics, while data-driven methods often violate physical constraints. Current solutions exhibit critical limitations: (1) While Sliding Mode Observers have been successfully applied for fault diagnosis and fault-tolerant control \cite{zhang2024sliding}, they generally suffer from limitations in fault quantification, particularly under varying operational modes where observer performance may vary and fault signatures may not be unique \cite{ guo2023nonsingular}; (2) Traditional Neural Networks estimate faults but often yield physically inconsistent results due to lacking physics-informed constraints, which are essential for improving model transparency and alignment with system dynamics \cite{li2023physics,retzler2024learning}; and (3) Adaptive Controllers require persistent excitation and struggle with transient faults, as highlighted by Jenkins et al., who expressed that insufficient excitation conditions can undermine stability \cite{jenkins2018convergence}. Recent advances in physics-guided learning further demonstrate that hybrid architectures combining physics-based models with neural networks can improve extrapolation and provide stability guarantees, as shown by new PGNN feedforward controllers with provable input-to-state stability properties \cite{bolderman2024physics}. Such approaches highlight the importance of embedding physical priors into learning-based control frameworks, especially for safety-critical thermal systems.

\begin{figure}[h!]
	\centering
	\includegraphics[width=0.8\columnwidth]{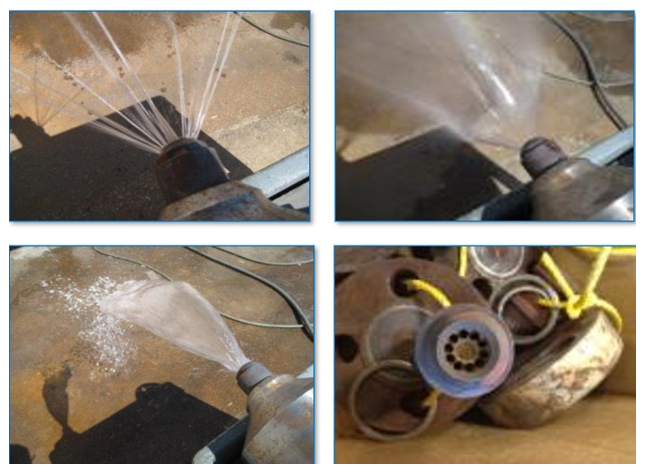}
	\caption{A loss-of-effectiveness actuator fault can manifest through various forms of nozzle damage, each degrading spray performance. Examples include plugging (top left), out-of-spec spray patterns (top right), broken nozzles (lower left), and completely detached or blown-off nozzle tips (lower right) \cite{gibbons_how_2025}.}
	\label{f3}
\end{figure}

Furthermore, purely data-driven methods such as MPC often suffer from inadequate data availability. They are prone to overfitting in the presence of non-Gaussian noise or exogenous disturbances typical in HRSGs \cite{van2020noisy}. High computational burdens also pose a challenge in AGC contexts where quick response times are essential \cite{sabounchi2021fltrl}. Finally, real-world deployment of redundant fault detection mechanisms increases complexity and cost, reducing compatibility with existing industrial infrastructure \cite{dijoux2022experimental, adumene2015performance}. Human factors also remain critical, as operators must often manually intervene, reducing trust in automated fault recovery and increasing reliance on experiential heuristics \cite{kaviri2012modeling}. These challenges are often rooted in the unpredictable mechanical behavior of degraded hardware. Fig.~\ref{f3} provides visual examples of physical nozzle damage that contribute to loss-of-effectiveness actuator faults in desuperheater systems.

To bridge these gaps, we propose a novel hybrid architecture integrating (1) a Physics-Informed Neural Network for estimation of multiplicative actuator faults using system dynamics as constraints; (2) a sliding mode observer for reconstruction of unmeasured states under fault conditions; and (3) a one-sided sliding mode control strategy for fault-tolerant temperature regulation with minimal actuation. This synergistic integration creates a physics-aware fault-tolerant control loop where the PINN informs both observer and controller about fault severity.

The paper's primary contributions are:
\begin{itemize}
	\item Employing PINNs for multiplicative actuator fault estimation in thermal energy systems.
	\item Providing a stable integration of SMO-PINN-SMC with Lyapunov-based uniform ultimate boundedness guarantees.
	\item A one-sided control law is implemented by minimizing valve actuation while preventing overtemperature.
\end{itemize} 

Experimental validation using real HRSG operational data demonstrates fault adaptation and smaller temperature overshoots compared to conventional PID controllers currently used in the industry.

The remainder of this paper is structured as follows.  
The nonlinear HRSG system, including the superheater-desuperheater dynamics and fault modeling, is introduced in Section~\ref{sec:model}.  
A detailed account of the SMO-PINN-SMC co-design methodology—covering fault estimation, state reconstruction, and control synthesis—is provided in Section~\ref{sec:method}.  
Stability analysis, based on Lyapunov theory and focused on uniform ultimate boundedness under actuator degradation, is presented in Section~\ref{sec:stability}.  
Performance validation using industrial HRSG data is reported in Section~\ref{sec:results}, demonstrating improved fault adaptation, reduced overshoot, and lower actuation demand.  
Concluding remarks and potential directions for future research are discussed in the final section, Section~\ref{sec:conclusion}.

\section{The System Model} \label{sec:model}

Accurate modeling of thermal subsystems in heat recovery steam generators is essential for effective control, particularly under actuator degradation. Due to the inherently nonlinear nature of convective heat transfer and mass flow interactions—especially in the superheater and desuperheater regions—linear approximations often fail to capture the system's transient and steady-state behavior under varying loads. Additionally, bilinear coupling between control inputs and temperature differences necessitates a nonlinear representation for fault diagnosis and robust control synthesis.

We consider a nonlinear model of a subsystem inside an HRSG, focusing on desuperheater/superheater temperature control. The state-space representation is given as~\cite{fanoodi2025pinn}:
\begin{equation}
	\begin{cases} 
		\dot{x}_1 = K_2\left(K_1 d_1 + d_7(x_2 - x_1) - d_3\right), \\ 
		\dot{x}_2 = K_3\left[(d_2 + u)(d_4 - x_2) -  u(d_4 - d_5) + \bar{m}_{\text{in}} d_6\right], \\
		y = x_1 
	\end{cases} \label{System_nominal}
\end{equation}

where:
\begin{itemize}
	\item \( x_1 \): outlet steam temperature (measured),
	\item \( x_2 \): intermediate attemperator steam temperature (unmeasured),
	\item \( u \): control input (spray valve actuation),
	\item \( y = x_1 \): system output.
\end{itemize}

\begin{table}[h]
	\centering
	\caption{Definition of Disturbance Terms in the HRSG Model}
	\resizebox{\columnwidth}{!}{
	\begin{tabular}{cll}
		\toprule
		\textbf{Symbol} & \textbf{Expression} & \textbf{Physical Meaning} \\
		\midrule
		$d_1$ & $\dot{m}_f$ & Fuel flow rate \\
		$d_2$ & $\dot{m}_{\text{in}_{\text{dsh}}}$ & Inlet mass flow rate to desuperheater \\
		$d_3$ & $\dot{\tilde{T}}_a$ & Derivative of the metallic parts temperature \\
		$d_4$ & $T_{\text{in}_{\text{dsh}}}$ & Desuperheater inlet steam temperature \\
		$d_5$ & $T_{\text{spray}}$ & Spray water temperature \\
		$d_6$ & $\dot{T}_{\text{in}_{\text{dsh}}}$ & Derivative of inlet steam temperature \\
		$d_7$ & $\dot{m}_{\text{out}_{\text{dsh}}}$ & Outlet mass flow rate from desuperheater \\
		\bottomrule
	\end{tabular}}
	\label{tab:disturbances}
	\end{table}

The model parameters are defined as:
\[
K_1 = \frac{H}{C_p}, \quad K_2 = \frac{1}{\rho_s V_s}, \quad K_3 = \frac{1}{\bar{m}_{\text{out}_{\text{dsh}}}}
\]

To account for actuator degradation, we introduce a \emph{multiplicative loss of effectiveness fault} into the spray valve actuation. The faulty system dynamics are represented as:

\begin{equation}
	\begin{cases} 
		\dot{x}_1 = K_2\left(K_1 d_1 + d_7(x_2 - x_1) - d_3\right), \\ 
		\dot{x}_2 = K_3\left[(d_2 + (1-\phi) u)(d_4 - x_2) - (1-\phi) u(d_4 - d_5) + \bar{m}_{\text{in}} d_6\right], \\
		y = x_1 
	\end{cases} \label{System_faulty}
\end{equation}

Here, \( \phi(t) \in [0, 1] \) denotes the time-varying actuator fault level, with:
\[
\text{Effective input} = (1 - \phi(t)) u
\]

\begin{itemize}
	\item \( \phi = 0 \): healthy actuator (full effectiveness),
	\item \( \phi < 1 \): degraded actuator.
\end{itemize}

This formulation allows the fault to directly scale the control input, preserving the multiplicative structure observed in physical degradation mechanisms such as stiction, scaling, or erosion. It forms the basis for the fault estimation and fault-tolerant control strategies developed in the following sections.

The objective is to design a fault-tolerant controller capable of maintaining temperature regulation in the HRSG superheater subsystem despite actuator degradation. By explicitly accounting for the multiplicative fault structure in (\ref{System_faulty}), the proposed control strategy will (1) estimate the time-varying fault level $\hat{\phi}(t)$, (2) compensate for the loss of actuation effectiveness through sliding mode control reconfiguration, and (3) preserve closed-loop stability and performance under both nominal and faulty conditions. The nonlinear model dynamics, coupled with the fault parametrization, provide a foundation for synthesizing a controller that addresses the inherent challenges—ensuring operational safety and efficiency during prolonged actuator degradation.

\section{Sliding Mode Observer-Physics Informed Neural Network-Sliding Mode Controller Architecture} \label{sec:method}

In practical HRSG configurations, not all internal states are directly measurable. While the outlet steam temperature \( x_1 \) is typically monitored, the intermediate steam temperature \( x_2 \) can be unmeasured due to sensor limitations or hardware cost constraints. Accurate knowledge of \( x_2 \) is, however, essential for estimating actuator effectiveness and ensuring reliable control.

To implement model-based fault-tolerant control and estimate the unknown fault signal \( \phi(t) \), the control architecture requires:
\begin{itemize}
	\item Estimation of all system states, especially \( x_2 \),
	\item Residual information capable of inferring \( \phi(t) \).
\end{itemize}

To address these needs, we propose a hybrid estimation structure consisting of:
\begin{itemize}
	\item A Sliding Mode Observer (SMO) to reconstruct unmeasured states using available measurements and control inputs,
	\item A Physics-Informed Neural Network (PINN) to estimate the multiplicative actuator fault \( \hat{\phi}(t) \) using system data and physics-based constraints.
\end{itemize}

\subsection{Sliding Mode Observer (SMO)}

Assuming the control input \( u(t) \) is computed from the fault tolerant controller and the PINN provides an estimate \( \hat{\phi}(t) \), the SMO is formulated as:

\begin{equation}
	\resizebox{\columnwidth}{!}{$
		\begin{cases}
			\begin{aligned}
				\dot{\hat{x}}_1 &= K_2 \left( K_1 d_1 + d_7 (\hat{x}_2 - \hat{x}_1) - d_3 \right) + \lambda_1 \cdot \text{sat}\left( \frac{e_1}{\delta_1} \right) \\
				\dot{\hat{x}}_{2}&=K_{3}\left[\left(d_{2}+(1-\hat{\phi})u\right)(d_{4}-\hat{x}_{2})-(1-\hat{\phi})u(d_{4}-d_{5})+\bar{m}_{\rm in}d_{6}\right]+ \lambda_2 \,\mathrm{sat}\!\left(\tfrac{e_2}{\delta_2}\right) 
			\end{aligned}
		\end{cases} \label{SMO} $}
\end{equation}

where:
\begin{itemize}
	\item \( e_1 = x_1 - \hat{x}_1 \) is the measurable output error,
	\item \( e_2 = x_2 - \hat{x}_2 \) is the unmeasured state error,
	\item \( \lambda_1, \lambda_2 \): observer sliding gains,
	\item \( \delta_1, \delta_2 \): boundary layer thicknesses.
\end{itemize}

Here the sliding surface is implicitly defined as \( e_1 = 0 \). Sliding mode is enforced through the corrective term \( \lambda_1 \cdot \text{sat}(e_1/\delta_1) \), which drives \( e_1 \) into a boundary layer of thickness \( \delta_1 \), approximating ideal convergence (\( e_1 \to 0 \)) while minimizing chattering. The unmeasured state \( \hat{x}_2 \) is reconstructed via the system-coupled dynamics and the secondary error injection term \( \lambda_2 \cdot \text{sat}(e_1/\delta_2) \), exploiting the inherent observability of the system. This sliding-mode formulation achieves state estimation under disturbances and actuator degradation, enabling fault-tolerant control without explicit surface parameterization.This observer leverages the known structure of the plant model to drive the estimation error to a small neighborhood of zero.

\subsection{Physics-Informed Neural Network (PINN)}

Actuator faults in HRSG systems are modeled as time-varying multiplicative losses of effectiveness \( \phi(t) \in [0, 1] \). When \( \phi(t) < 1 \), the system experiences reduced actuation due to degradation phenomena such as stiction or partial valve blockage. Since \( \phi(t) \) is unmeasured and cannot be observed directly, we employ a PINN to estimate it in real time using system data and structural dynamics.

  At each time step, the input vector consists of:
\begin{itemize}
	\item Measured outlet steam temperature \( x_1(t) \),
	\item Estimated intermediate temperature \( \hat{x}_2(t) \) from the SMO,
	\item Control input \( u(t) \).
\end{itemize}
The output is the estimated fault signal:
\begin{equation}
	\hat{\phi}(t) = \text{PINN}(x_1(t), \hat{x}_2(t), u(t); W), \label{PINN}
\end{equation}
where \( W \) denotes the neural network parameters as weights and biases.

The PINN is trained by minimizing a loss function that incorporates the nonlinear plant dynamics of \( x_2 \). Specifically, the physics-based term penalizes deviation from the predicted time derivative of \( x_2 \):

\begin{equation}
		\resizebox{\columnwidth}{!}{$
		\mathcal{L}_{\text{physics}} = \left\| 
		\frac{d\hat{x}_2}{dt} - 
		K_3 \left[ 
		\left(d_2 + (1-\hat{\phi})u \right)(d_4 - \hat{x}_2) 
		- (1-\hat{\phi})u (d_4 - d_5) 
		+ \bar{m}_{\text{in}} d_6 
		\right]
		\right\|^2 $}
\end{equation}

 A regularization term prevents overfitting:
\[
\mathcal{L}_{\text{total}} = \mathcal{L}_{\text{physics}} + \lambda_{\text{reg}} \cdot \mathcal{L}_{\text{reg}}.
\]

where:

\begin{equation*}
		\mathcal{L}_{\text{reg}} = \| \mathbf{W} \|^2
\end{equation*}

To minimize $\mathcal{L}_{\text{total}}$, the network weights $\mathbf{W}$ are updated via online gradient descent:
\begin{equation}
	\mathbf{W}_{k+1} = \mathbf{W}_k - \eta \nabla_{\mathbf{W}} \mathcal{L}_{\text{total}}
	\label{eq:weight_update}
\end{equation}
where:
\begin{itemize}
	\item $\eta > 0$ is the learning rate,
	\item $\nabla_{\mathbf{W}} \mathcal{L}_{\text{total}}$ denotes the gradient of the total loss with respect to weights,
	\item $k$ is the discrete-time update index.
\end{itemize}
The gradient $\nabla_{\mathbf{W}} \mathcal{L}_{\text{total}}$ is computed via backpropagation, with automatic differentiation used to evaluate the physics-based loss term $\mathcal{L}_{\text{physics}}$. This update occurs at each sampling instant during online operation.

This online gradient descent approach enables continuous adaptation of the PINN to evolving fault conditions. The physics-informed loss $\mathcal{L}_{\text{physics}}$ serves as a soft constraint, ensuring weight updates respect the underlying system dynamics. Computational efficiency is maintained through mini-batch processing of recent operational data within a sliding window, balancing real-time performance with estimation accuracy. The learning rate $\eta$ is tuned to ensure stable convergence while accommodating the slow thermal dynamics of HRSG systems.

The PINN respects known physical bounds by employing a sigmoid output activation function, ensuring \( \hat{\phi}(t) \in (0, 1) \). The slow dynamics of thermal systems support the use of smooth activation functions and continuous estimates. Also it runs online in parallel with the SMO and the controller. The estimated fault signal \( \hat{\phi}(t) \) is used to:
\begin{itemize}
	\item Adjust the observer dynamics for accurate state reconstruction,
	\item Adapt the control law for actuator-compensated temperature regulation.
\end{itemize}
This integration enables fault-tolerant performance even under unobservable fault conditions, preserving output regulation and system stability.

\subsection{Sliding Mode Control (SMC)}

To compensate for reduced actuation due to fault \( \phi(t) \), we employ a one-sided Sliding Mode Control law. The control input is defined as:
\begin{equation}
		\resizebox{\columnwidth}{!}{$
	u(t) = \frac{v(t)}{1 - \hat{\phi}(t) + \epsilon}, \quad 
	v(t) = 
	\begin{cases} 
		k \cdot \min\left(1, \dfrac{s(t)}{\delta}\right) & \text{if } s(t) > 0, \\
		0 & \text{otherwise},
	\end{cases} \quad 
	s(t) = x_1 - x_1^{\text{ref}}(t), \label{OneSidedController} $}
\end{equation}

where:
\begin{itemize}
	\item \( \hat{\phi}(t) \): fault estimate from the PINN,
	\item \( \epsilon > 0 \): regularization to prevent division by zero,
	\item \( k, \delta \): control gains,
	\item \( x_1^{\text{ref}}(t) \): desired outlet temperature trajectory.
\end{itemize}

This control strategy only activates when the outlet temperature exceeds the reference \( x_1 > x_1^{\text{ref}} \), injecting spray water (\( u > 0 \)) to reduce temperature. When the temperature is less than or equal to the reference, the controller remains idle (\( u = 0 \)). This \emph{one-sided control} minimizes unnecessary actuation while ensuring thermal safety by preventing overheating. Fig.~\ref{fig:smo_pinn_smc} summarizes the integration of observer, fault estimator, and controller components in the proposed fault-tolerant architecture.

\begin{figure}[h!]
	\centering
	\includegraphics[width=0.7\columnwidth]{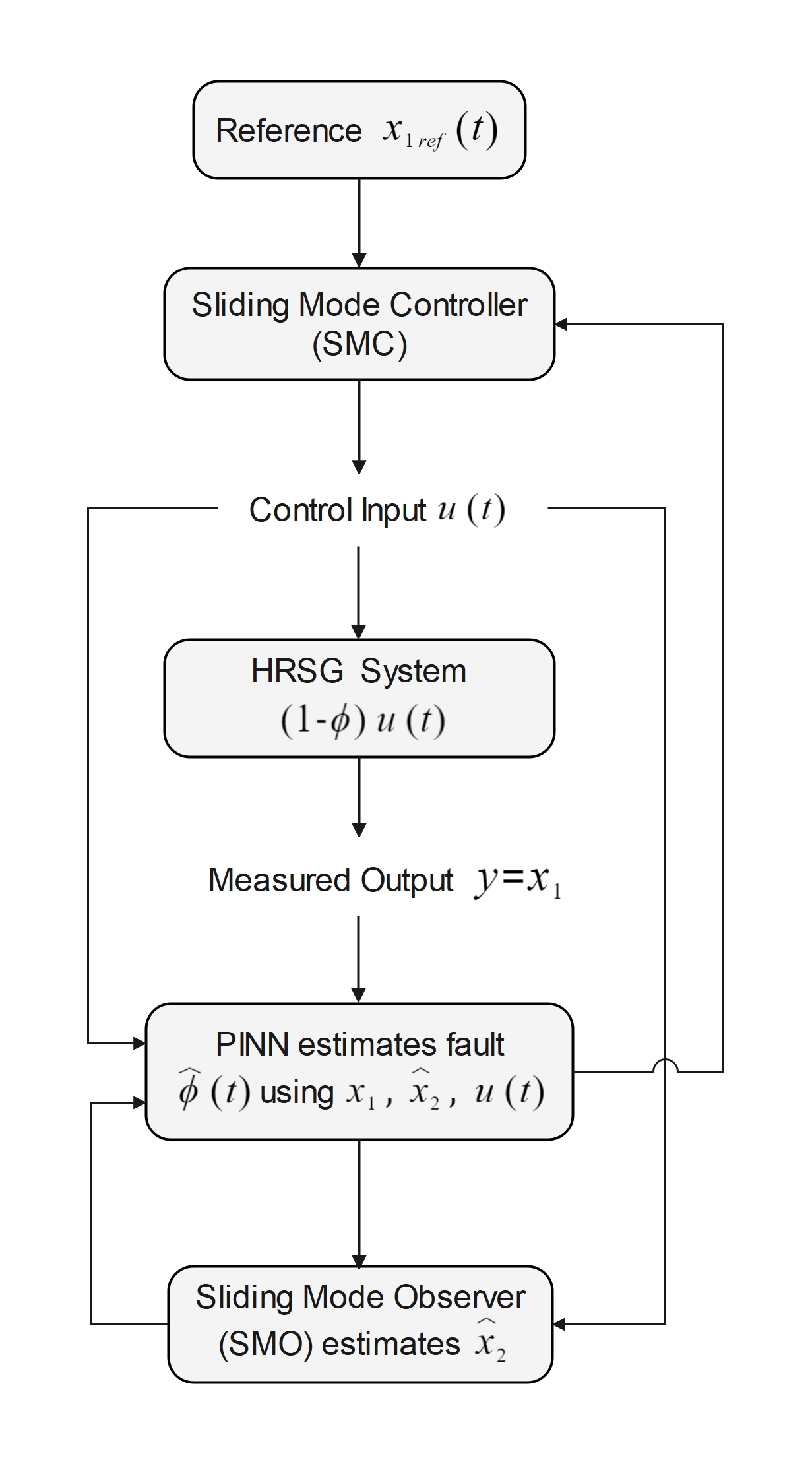}
	\caption{Schematic of the proposed SMO–PINN–SMC architecture. The Sliding Mode Observer (SMO) reconstructs unmeasured states; the Physics-Informed Neural Network (PINN) estimates the multiplicative actuator fault; and the Sliding Mode Controller (SMC) ensures fault-tolerant steam temperature regulation. The structure creates a closed-loop adaptive control scheme that responds to loss-of-effectiveness faults in real time.}
	\label{fig:smo_pinn_smc}
\end{figure}

\section{Stability}
\label{sec:stability}

We now show that, in the presence of actuator faults, the proposed SMO-PINN-SMC framework guarantees uniform ultimate boundedness of the state estimation and tracking errors.
\subsection{Closed-loop stability Analysis}
\begin{assumption}
	\label{assumption1}
	
	\textbf{Lipschitz Continuity\cite{khalil2002nonlinear}:} The nonlinear function \( f(x_2, u, \phi) = (d_2 + (1-\phi) u)(d_4 - x_2) - (1-\phi) u(d_4 - d_5) \) is assumed to be Lipschitz continuous in both the state variable \( x_2 \) and the fault parameter \( \phi \), with corresponding Lipschitz constants \( L_x > 0 \) and \( L_\phi > 0 \). That is,
	\[
	|f(x_2, u, \phi) - f(\hat{x}_2, u, \hat{\phi})| \leq L_x |x_2 - \hat{x}_2| + L_\phi |u| \cdot |\phi - \hat{\phi}|
	\]
	for all admissible \( x_2, \hat{x}_2 \in \mathbb{R} \), \( \phi, \hat{\phi} \in [0,1] \), and bounded control input \( u \in \mathbb{R} \).
	
	This assumption reflects the physical reality of the HRSG desuperheater system. The nonlinear function \( f(x_2, u, \phi) \), which models the effect of the water spray control input \( u \) (modulated by the actuator effectiveness \( \phi \)) on the intermediate temperature \( x_2 \), primarily involves bilinear terms of the form \( \phi u (d_4 - x_2) \). Physically, this represents the convective heat transfer between the spray and the steam, which varies smoothly with both the spray amount and the temperature difference.
	
	In practice, both the control input \( u \) and the actuator fault \( \phi \) are bounded due to physical constraints on the valve operation and system protection mechanisms. Moreover, temperature variations within the HRSG are naturally slow and continuous, as the thermal dynamics of the fluid and metal structures filter out high-frequency changes. This results in smooth changes in \( f(x_2, u, \phi) \) with respect to \( x_2 \) and \( \phi \).
\end{assumption}

\begin{assumption}
	\label{assumption2}
	
	The disturbance term \( K_1 d_1 - d_3 \), representing the mismatch between the expected and actual thermal effect on the superheater metal, is assumed to be bounded by a  constant \( M_b \).
	
	This assumption is thermodynamically justified because the fuel mass flow rate, which governs the gas turbine thermal energy output, is bounded by design and operational limits. The resulting hot gas flow transfers heat to the superheater metal through convection and radiation, characterized by bounded heat transfer coefficients. Crucially, the temperature evolution of the metallic structure is limited by its thermal inertia—specifically, its mass and heat capacity—which prevents arbitrarily fast changes. Consequently, the mismatch between the idealized, fuel-input-driven rate of temperature rise and the actual rate of change of the metal temperature remains bounded.

\end{assumption}
\begin{assumption}
	\label{assumption3}
	The parameter \(d_7\), representing the mass flow rate of steam leaving the desuperheater section, is positive and sufficiently small such that \(K_2 d_7^2 < 1\).
	
	This assumption ensures two key physical conditions. First, \(d_7 > 0\) guarantees continuous steam outflow, which is necessary for normal operation and thermal energy transfer in the HRSG. Second, the condition \(K_2 d_7^2 < 1\) prevents excessive nonlinear gain in the dynamic model, which could otherwise lead to unrealistic system behavior. Since \(K_2\) depends on the thermophysical properties and geometry of the desuperheater, and the steam outflow \(d_7\) is typically small in normal operating regimes, this condition is naturally satisfied in practical scenarios.
	
\end{assumption}

\begin{theorem}[Uniform Ultimate Boundedness]
	Consider the closed-loop HRSG system (1) with control law \eqref{OneSidedController}, Sliding Mode Observer \eqref{SMO}, and PINN estimator \eqref{PINN}. Under the Assumptions \ref{assumption1} to \ref{assumption3}, the errors \(e_1(t)\), \(e_2(t)\), and \(s(t)\) are uniformly ultimately bounded with observer gains satisfying:
	\begin{equation*}
		\lambda_1 > K_2d_7 + K_3L_x, \quad \lambda_2 > K_3L_\phi u_{\max}
	\end{equation*}
\end{theorem}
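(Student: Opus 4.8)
\medskip

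The plan is to construct a composite Lyapunov function $V = V_1 + V_2 + V_s$ capturing the three error channels, where $V_1 = \tfrac12 e_1^2$ for the measured-state observer error, $V_2 = \tfrac12 e_2^2$ for the unmeasured-state observer error, and $V_s = \tfrac12 s^2$ for the tracking error, and then show that $\dot V$ is negative outside some compact ball, which by the standard UUB theorem (Khalil) yields the claim. First I would derive the error dynamics. Subtracting the observer \eqref{SMO} from the faulty plant \eqref{System_faulty}, the $e_1$-dynamics become $\dot e_1 = K_2 d_7 (e_2 - e_1) - \lambda_1\,\mathrm{sat}(e_1/\delta_1)$, noting that the $K_1 d_1 - d_3$ terms cancel since they appear identically in plant and observer. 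The $e_2$-dynamics become $\dot e_2 = K_3\big[f(x_2,u,\phi) - f(\hat x_2,u,\hat\phi)\big] - \lambda_2\,\mathrm{sat}(e_2/\delta_2)$, to which Assumption~\ref{assumption1} applies directly, giving $|f(x_2,u,\phi)-f(\hat x_2,u,\hat\phi)| \le L_x|e_2| + L_\phi u_{\max}|\phi-\hat\phi|$, with the PINN estimation error $\tilde\phi = \phi - \hat\phi$ treated as a bounded exogenous signal (bounded since both $\phi,\hat\phi \in [0,1]$, and the PINN's physics-informed training keeps it small; this is where Assumption on boundedness of training residuals is implicitly used). For $V_s$, differentiating $s = x_1 - x_1^{\mathrm{ref}}$ along \eqref{System_faulty} gives $\dot s = K_2\big(K_1 d_1 + d_7(x_2 - x_1) - d_3\big) - \dot x_1^{\mathrm{ref}}$; on the active region $s>0$ the one-sided control injects spray that opposes the temperature rise, so I would bound $\dot s \le -(\text{decay term in } s) + K_2 d_7|e_2| + M_b + \|\dot x_1^{\mathrm{ref}}\|_\infty$ using Assumption~\ref{assumption2}, and on the inactive region $s\le 0$ we have $V_s$ already in the target set so no control action is needed.

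\medskip

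Next I would assemble $\dot V$ outside the boundary layers (where $\mathrm{sat}(\cdot) = \mathrm{sign}(\cdot)$, so $e_i\,\mathrm{sat}(e_i/\delta_i) = |e_i|$). Collecting terms, $\dot V \le -\lambda_1|e_1| - \lambda_2|e_2| + K_2 d_7|e_1||e_2| + K_2 d_7 e_1^2 + K_3 L_x e_2^2 + K_3 L_\phi u_{\max}|e_2|\,|\tilde\phi| + (\text{tracking terms})$. I would then use Assumption~\ref{assumption3} ($K_2 d_7^2 < 1$, and more directly $K_2 d_7 < 1$ after a Young's-inequality split of the cross term $|e_1||e_2| \le \tfrac12 e_1^2 + \tfrac12 e_2^2$ or a norm bound) to absorb the quadratic $e_1^2, e_2^2$ contributions, and invoke the gain conditions $\lambda_1 > K_2 d_7 + K_3 L_x$ and $\lambda_2 > K_3 L_\phi u_{\max}$ to ensure the coefficients multiplying $|e_1|$ and $|e_2|$ are strictly negative. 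This produces $\dot V \le -\alpha_1|e_1| - \alpha_2|e_2| - \alpha_s\,\sigma(s) + \beta$ for positive constants $\alpha_i$ and a residual constant $\beta$ collecting $M_b$, $\|\dot x_1^{\mathrm{ref}}\|_\infty$, the boundary-layer leakage terms of order $\lambda_i\delta_i$, and the residual $L_\phi u_{\max}\sup_t|\tilde\phi(t)|$. Hence $\dot V < 0$ whenever $(|e_1|,|e_2|,|s|)$ lies outside a ball of radius $O(\beta/\min_i\alpha_i)$, which is exactly UUB.

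\medskip

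The main obstacle I anticipate is the tracking-error channel, because the one-sided control \eqref{OneSidedController} is discontinuous and only acts for $s>0$, so the usual $s\dot s \le -k|s|$ reaching argument does not hold symmetrically; I would need to argue a separate invariance-type bound showing that once $s$ becomes negative it cannot overshoot below $x_1^{\mathrm{ref}}$ by more than the size driven by the disturbance $M_b$ and observer error, i.e. $s$ is rendered ultimately bounded from above by the control and from below by the natural cooling dynamics of the plant, which requires a mild structural assumption that the uncontrolled temperature does not drift arbitrarily far below setpoint — this should be folded into, or follow from, the boundedness of $K_1 d_1 - d_3$ in Assumption~\ref{assumption2}. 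A secondary subtlety is handling the nominal-vs-faulty discrepancy: the control law compensates via the factor $1/(1-\hat\phi+\epsilon)$, and I would need to show the residual actuation mismatch $(1-\phi)/(1-\hat\phi+\epsilon) - 1$ is bounded by a term proportional to $|\tilde\phi|$ plus $\epsilon$, so that it too is absorbed into $\beta$. Finally I would note the consistency of the two boundary-layer regimes — inside $|e_i|<\delta_i$ the state is already in the target set — so the UUB ball radius is the maximum of the reaching-phase bound and the boundary-layer thickness, completing the argument.
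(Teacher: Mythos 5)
Your overall architecture matches the paper's: the same composite Lyapunov function $V=\tfrac12(e_1^2+e_2^2+s^2)$, the same error dynamics obtained by subtracting the observer from the faulty plant, Assumption~\ref{assumption1} to bound the $e_2$-channel with $|\tilde\phi|=|\phi-\hat\phi|$ treated as bounded, and Assumption~\ref{assumption2} plus boundedness of $x_2$ for the $s$-channel. However, the decisive step in your sketch does not close. After the Lipschitz bound, the $e_2$-channel contains the \emph{quadratic} term $+K_3L_x e_2^2$, while the saturated injection supplies at most the \emph{linear} (indeed bounded) term $-\lambda_2|e_2|$; no choice of $\lambda_2$ satisfying $\lambda_2> K_3L_\phi u_{\max}$ makes a linear term dominate a quadratic one for large $|e_2|$, so your claimed inequality $\dot V\le-\alpha_1|e_1|-\alpha_2|e_2|-\alpha_s\sigma(s)+\beta$ does not follow globally. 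Assumption~\ref{assumption3} cannot rescue this: it constrains $K_2d_7^2$, not $K_3L_x$ (and note $K_2d_7^2<1$ does not imply $K_2d_7<1$, which you invoke parenthetically). You also carry the term $+K_2d_7e_1^2$ with the wrong sign — the true $e_1$-dynamics give $-K_2d_7e_1^2$, which is the only natural quadratic damping available and should be kept, not ``absorbed.'' To make your route rigorous you must either generate genuine quadratic damping in the $e_2$- and $s$-channels or restrict the analysis to a compact operating region where $|e_2|$ is a priori bounded, so that $K_3L_xe_2^2\le K_3L_x\bar e_2|e_2|$ and the gain conditions can then be applied to linear terms; as written this is a gap. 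The paper proceeds differently at exactly this point: it converts all mixed and linear terms into quadratics plus constants via Young-type inequalities, collects $\dot V\le -2\kappa V+\widetilde\Theta$, and invokes the comparison lemma, rather than using the sign-function/reaching-style argument with the stated gain inequalities acting on $|e_1|,|e_2|$.

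A secondary structural point: in the $s$-channel you assert that, when $s>0$, the one-sided control ``injects spray that opposes the temperature rise,'' yielding a decay term in $\dot s$. But $u$ does not appear in $\dot x_1$ at all — the input has relative degree two with respect to $x_1$, acting only through $x_2$ — so the instantaneous control action provides no damping in $s\dot s$. The damping there is the plant's own $-K_2d_7 s^2$ term (obtained by writing $x_1=s+x_1^{\mathrm{ref}}$), with the $x_2$-dependence bounded via $M_{x_2}$ and Assumption~\ref{assumption2}; this is the route the paper takes. For the same reason, your proposed bound on the actuation mismatch $(1-\phi)/(1-\hat\phi+\epsilon)-1$ is not needed for the $s$-channel analysis. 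Your observation that the one-sided law requires a separate argument to prevent unbounded undershoot of $s$ is a fair and more careful point than the paper makes explicit, but it should be grounded in the $-K_2d_7s$ damping together with the bounded disturbances rather than in any control-induced decay.
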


\begin{proof}
	
  Consider Lyapunov candidate
  \[
  V(\mathbf e) = \tfrac12 ( e_1^2 + e_2^2 + s^2 ),
  \qquad
  \mathbf e=[e_1,e_2,s]^\top.
  \]
  Here $V$ is positive definite and radially unbounded and $V=\tfrac12\|\mathbf e\|^2$.

  \medskip
  
From the SMO and error definitions \eqref{SMO} we have
  \begin{align}
  	\dot e_1 &= K_2 d_7 (e_2 - e_1) - \lambda_1 \,\mathrm{sat}\!\big(\tfrac{e_1}{\delta_1}\big), \label{e1dot}\\
  	\dot e_2 &= K_3\big(f(x_2,u,\phi)-f(\hat x_2,u,\hat\phi)\big) - \lambda_2 \,\mathrm{sat}\!\big(\tfrac{e_2}{\delta_2}\big). \label{e2dot}
  \end{align}
  
  Multiply \eqref{e1dot} by $e_1$ and use $|e_1e_2|\le |e_1||e_2|$:
  \begin{align}
  	e_1\dot e_1
  	&= K_2 d_7 e_1 e_2 - K_2 d_7 e_1^2 - \lambda_1 e_1 \,\mathrm{sat}\!\big(\tfrac{e_1}{\delta_1}\big) \nonumber\\
  	&\le K_2 d_7 |e_1||e_2| - K_2 d_7 e_1^2 - \lambda_1 |e_1| + \lambda_1\delta_1, \label{e1_bound}
  \end{align}

  For the second error, we apply Lipschitz continuity of $f$, therefore from \eqref{e2dot} we get
  \begin{align}
  	e_2\dot e_2
  	&\le K_3 L_x e_2^2 + K_3 L_\phi |u|\,|\phi-\hat\phi|\,|e_2|
  	- \lambda_2 e_2\,\mathrm{sat}\!\big(\tfrac{e_2}{\delta_2}\big) \nonumber\\
  	&\le K_3 L_x e_2^2 + K_3 L_\phi u_{\max}\Delta_{\phi}\,|e_2|
  	-\lambda_2 |e_2| + \lambda_2 \delta_2. \label{e2_intermediate_bound}
  \end{align}

  For the $s$ dynamics:
  \begin{align}
  	\dot{s} &= \dot{x}_1 - \dot{x}_1^{\mathrm{ref}},\nonumber \\ 
  	s \dot{s} &= s \left( \dot{x}_1 - \dot{x}_1^{\mathrm{ref}} \right) = s K_2 \left( K_1 d_1 + d_7 (x_2 - x_1) - d_3 \right) - s \dot{x}_1^{\mathrm{ref}} \nonumber\\
  	&= K_2 s \left( K_1 d_1 - d_3 + d_7 x_2 - d_7 (s + x_1^{\mathrm{ref}}) \right) - s \dot{x}_1^{\mathrm{ref}} \nonumber\\
  	&= -K_2 d_7 s^2 + K_2 s \left( K_1 d_1 - d_3 + d_7 x_2 - d_7 x_1^{\mathrm{ref}} \right) - s \dot{x}_1^{\mathrm{ref}} \nonumber\\
  	&\leq K_2 d_7 s^2 + |s| K_2 \left( M_b + d_7 M_{x_2} \right) + K_2 d_7 |x_1^{\mathrm{ref}}| |s| + |s| |\dot{x}_1^{\mathrm{ref}}|. \label{s_bound}
  \end{align}

  \medskip
  We now eliminate mixed absolute-value terms by standard Young inequalities.
  
  \emph{(i) The $e_1$--$e_2$ cross-term.} For any $\alpha>0$,
  \[
  K_2 d_7 |e_1||e_2|
  \le \frac{\alpha}{2} e_1^2 + \frac{(K_2 d_7)^2}{2\alpha} e_2^2.
  \]
  Substitute into \eqref{e1_bound} we obtain
  \begin{equation}
  	e_1\dot e_1
  	\le -\Big(K_2 d_7 - \frac{\alpha}{2}\Big)e_1^2
  	+ \frac{(K_2 d_7)^2}{2\alpha} e_2^2
  	-\lambda_1 |e_1| + \lambda_1\delta_1. \label{e1_after_young}
  \end{equation}
  
  \emph{(ii) The $e_2$ linear term from fault uncertainty.} For any $\mu>0$,
  \[
  K_3 L_\phi u_{\max}\Delta_{\phi}\,|e_2|
  \le \frac{\mu}{2} e_2^2 + \frac{(K_3 L_\phi u_{\max}\Delta_{\phi})^2}{2\mu}.
  \]
  Apply this in \eqref{e2_intermediate_bound}. Next, to convert the \(-\lambda_2|e_2|\) term into negative quadratic damping we use the inequality:
  \begin{equation}
  	-\lambda_2 |e_2| \le -\frac{\beta}{2} e_2^2 - \frac{\lambda_2^2}{2\beta}. \label{neg_lin_to_quad}
  \end{equation}
  Using these two bounds yields

 \begin{align}
 	e_2 \dot{e}_2 
 	&= e_2 \left( f(x_2,u,\phi) - f(\hat{x}_2,u,\hat{\phi}) \right) + \lambda_2 \operatorname{sat}\left(\frac{e_2}{\delta_2}\right) \nonumber\\
 	&\le e_2 \left( L_x |e_2| + L_\phi |u| |\phi - \hat{\phi}| \right) + \lambda_2 \operatorname{sat}\left(\frac{e_2}{\delta_2}\right) \nonumber\\
 	&= L_x e_2^2 + L_\phi |u| |e_2| |\phi - \hat{\phi}| + \lambda_2 \operatorname{sat}\left(\frac{e_2}{\delta_2}\right) \nonumber\\
 	&\le L_x e_2^2 + \frac{(L_\phi |u| |\phi - \hat{\phi}|)^2}{2 \mu} + \frac{\mu}{2} e_2^2 + \lambda_2 \operatorname{sat}\left(\frac{e_2}{\delta_2}\right) \nonumber\\
 	&\le \left( L_x + \frac{\mu}{2} + \frac{\beta}{2} \right) e_2^2 + \frac{(L_\phi |u| |\phi - \hat{\phi}|)^2}{2 \mu} + \lambda_2 \delta_2 - \frac{\lambda_2^2}{2 \beta}\nonumber \\
 	& \resizebox{\columnwidth}{!}{$\le \left( K_3 L_x + \frac{\mu}{2} - \frac{\beta}{2} + \frac{(K_2 d_7)^2}{2 \alpha} \right) e_2^2 + \frac{(K_3 L_\phi u_{\max} \Delta_\phi)^2}{2 \mu} + \lambda_2 \delta_2 - \frac{\lambda_2^2}{2 \beta}. \label{e2-afteryong} $}
 \end{align}

  \emph{(iii) The $s$-terms.} Use Young's inequalities with a free constant $c>0$ to get
  \begin{align*}
  	|s|K_2(M_b + d_7 M_{x_2}) &\le \frac{c}{2} s^2 + \frac{[K_2(M_b + d_7 M_{x_2})]^2}{2c},\label{s_young1}\\
  	K_2 d_7 |x_1^{\mathrm{ref}}||s| &\le \frac{1}{2} (x_1^{\mathrm{ref}})^2 + \frac{1}{2} K_2^2 d_7^2 s^2.
  \end{align*}
  Combine these with \eqref{s_bound} to obtain
  \begin{equation}
  	s\dot s \le \left(K_2 d_7 + \frac{c}{2} + \frac{1}{2}K_2^2 d_7^2\right)s^2
  	+ \frac{[K_2(M_b + d_7 M_{x_2})]^2}{2c} + \tfrac12 (x_1^{\mathrm{ref}})^2. \label{s_after_young}
  \end{equation}
  
  \medskip
  
  Add \eqref{e1_after_young}, \eqref{e2-afteryong} and \eqref{s_after_young}:

  \begin{align}
  	\dot V &= e_1\dot e_1 + e_2\dot e_2 + s\dot s \nonumber\\
  	&\le
  	-\Big(K_2 d_7 - \frac{\alpha}{2}\Big)e_1^2 \nonumber\\
  	&\quad + \Big( K_3 L_x + \frac{\mu}{2} - \frac{\beta}{2}
  	+ \frac{(K_2 d_7)^2}{2\alpha} \Big) e_2^2 \nonumber\\
  	&\quad + \Big( K_2 d_7 + \frac{c}{2} + \frac{1}{2}K_2^2 d_7^2 \Big) s^2
  	- \lambda_1 |e_1| \nonumber\\
  	& \resizebox{\columnwidth}{!}{$\quad + \underbrace{\lambda_1\delta_1
  		+ \frac{(K_3 L_\phi u_{\max}\Delta_{\phi})^2}{2\mu}
  		+ \lambda_2 \delta_2
  		+ \frac{[K_2(M_b + d_7 M_{x_2})]^2}{2c}
  		+ \tfrac12 (x_1^{\mathrm{ref}})^2
  		- \frac{\lambda_2^2}{2\beta}}_{\Theta(\alpha,\mu,\beta,c,\lambda_1,\lambda_2)}. $}
  	\label{Vdot_collected} 
  \end{align}

  \textbf{Choice of design parameters to obtain negativity:}
  We now choose the free parameters $\alpha,\mu,\beta,c>0$ and the observer gains $\lambda_1,\lambda_2$ so that the quadratic coefficients on $e_1^2,e_2^2,s^2$ are strictly negative.
  
  \begin{itemize}
  	\item Choose $\alpha$ small enough so that
  	\[
  	K_2 d_7 - \frac{\alpha}{2} > 0 \quad\Longrightarrow\quad
  	\text{coefficient of }e_1^2\;=\; -\Big(K_2 d_7 - \frac{\alpha}{2}\Big)\;<\;0.
  	\]
  	This is always possible for any $K_2d_7>0$ by picking $\alpha\in(0,2K_2 d_7)$.
  	
  	\item Choose $\beta>0$ large enough so that
  	\[
  	\frac{\beta}{2} > K_3 L_x + \frac{\mu}{2} + \frac{(K_2 d_7)^2}{2\alpha},
  	\]
  	which makes the coefficient of $e_2^2$ negative:
  	\[
  	K_3 L_x + \frac{\mu}{2} - \frac{\beta}{2} + \frac{(K_2 d_7)^2}{2\alpha} < 0.
  	\]
  	(Such a $\beta$ exists because the right-hand side is finite; increasing $\beta$ increases the left-hand side's negativity.)
  	
  	\item Choose $c>0$ sufficiently small so that
  	\[
  	K_2 d_7 + \frac{c}{2} + \frac{1}{2}K_2^2 d_7^2 <: \bar\kappa_s
  	\]
  	is a finite positive number. By taking $c$ small we make this coefficient as close to $K_2 d_7 + \tfrac12 K_2^2 d_7^2$ as needed.
  \end{itemize}
  
  With $\alpha,\mu,\beta,c$ chosen as above, there exist positive parameters
  \[
  \kappa_1 := \frac{1}{2}\Big(K_2 d_7 - \frac{\alpha}{2}\Big) > 0, \qquad
  \kappa_2 := \frac{1}{2}\Big( \frac{\beta}{2} - K_3 L_x - \frac{\mu}{2} - \frac{(K_2 d_7)^2}{2\alpha}\Big) > 0,
  \]
  and
  \[
  \kappa_3 := \frac{1}{2}\Big( K_2 d_7 + \frac{c}{2} + \frac{1}{2}K_2^2 d_7^2\Big) > 0,
  \]
  such that the quadratic part of \eqref{Vdot_collected} can be lower bounded by
  \[
  -2\kappa_1 e_1^2 - 2\kappa_2 e_2^2 - 2\kappa_3 s^2.
  \]
  
  The term $-\lambda_1|e_1|$ further helps negativity for large $|e_1|$, to avoid technicalities we simply absorb this linear term into the quadratic negative definite part by noting that for any $\eta>0$,
  \[
  -\lambda_1|e_1| \le -\eta e_1^2 + \frac{\lambda_1^2}{4\eta}.
  \]
  Choosing $\eta \in(0,\kappa_1)$ we keep a net negative quadratic coefficient on $e_1^2$. Concretely, pick $\lambda_1$ large enough that the constant $\lambda_1^2/(4\eta)$ is acceptable in the ultimate bound.
  
  \medskip
  
  Putting these estimates together, we obtain from \eqref{Vdot_collected} the differential inequality
  \begin{equation}
  	\dot V \le -2\kappa V + \widetilde{\Theta},
  	\label{Vdot_final}
  \end{equation}
  where
  \[
  \kappa := \min(\kappa_1-\eta,\;\kappa_2,\;\kappa_3) > 0
  \]
  (using the small $\eta$ chosen above) and $\widetilde{\Theta}>0$ is a finite parameter that collects all remaining terms:
  \[\resizebox{\columnwidth}{!}{$
  \widetilde{\Theta} =
  \lambda_1\delta_1
  + \frac{(K_3 L_\phi u_{\max}\Delta_{\phi})^2}{2\mu}
  + \lambda_2 \delta_2
  + \frac{[K_2(M_b + d_7 M_{x_2})]^2}{2c}
  + \tfrac12 (x_1^{\mathrm{ref}})^2
  + \frac{\lambda_1^2}{4\eta}
  + \frac{\lambda_2^2}{2\beta_1}, $}
  \]
  where $\beta_1$ is a convenient constant coming from \eqref{neg_lin_to_quad} bookkeeping (absorbed into the earlier choice of $\beta$).
  
  \medskip
  
 From \eqref{Vdot_final}, the comparison scalar ODE
  \[
  \dot W = -2\kappa W + \widetilde{\Theta}, \qquad W(0)=V(0)
  \]
  has the solution
  \[
  W(t) = e^{-2\kappa t}V(0) + \frac{\widetilde{\Theta}}{2\kappa}\big(1-e^{-2\kappa t}\big).
  \]
  By the comparison lemma $V(t)\le W(t)$ for all $t\ge0$, and therefore
  \[
  \limsup_{t\to\infty} V(t) \le \frac{\widetilde{\Theta}}{2\kappa}.
  \]
  Returning to the Euclidean norm $\|\mathbf e\|^2 = 2V$ we obtain the uniform ultimate bound
  \[
  \limsup_{t\to\infty} \|\mathbf e(t)\|
  \le \sqrt{\frac{\widetilde{\Theta}}{\kappa}}.
  \]
  Component-wise bounds follow immediately: $|e_i(t)| \le \|\mathbf e(t)\|$.
  
  \medskip

  With these choices, the closed-loop estimation error $\mathbf e$ is uniformly ultimately bounded. This completes the proof.

\end{proof}

\subsection{Boundedness of PINN Weights}

\begin{assumption}
	\label{assumption4}

	The total PINN loss function $L_{\mathrm{total}}(W)$ has a gradient 
	$\nabla_W L_{\mathrm{total}}(W)$ that is $L_g$-Lipschitz continuous on a compact neighborhood $\Omega$ of the minimizer $W^\ast$:
	\[
	\|\nabla_W L_{\mathrm{total}}(W_1) - \nabla_W L_{\mathrm{total}}(W_2)\| 
	\le L_g \|W_1 - W_2\|,
	\quad \forall W_1,W_2 \in \Omega.
	\]
	Moreover, $L_{\mathrm{total}}$ is convex on $\Omega$:
	\[
	\langle \nabla_W L_{\mathrm{total}}(W) - \nabla_W L_{\mathrm{total}}(W^\ast),\, W - W^\ast \rangle 
	\ge \mu_g \|W - W^\ast\|^2, 
	\quad \forall W \in \Omega,
	\]
	for some constants $\mu_g>0$ and $L_g>0$.
\end{assumption}

\begin{lemma}
	\label{lemma1}

	Under Assumption~\ref{assumption4}, for any step size $0 < \eta < \frac{2}{L_g}$, the following contraction inequality holds:
	\[
	\| W - W^\ast - \eta \big( \nabla_W L_{\mathrm{total}}(W) - \nabla_W L_{\mathrm{total}}(W^\ast) \big) \|
	\le \rho \, \|W - W^\ast\|,
	\]
	where
	\[
	\rho := \sqrt{\,1 - 2\eta\mu_g + \eta^2 L_g^2\,} \quad \text{satisfies} \quad 0 < \rho < 1.
	\]
\end{lemma}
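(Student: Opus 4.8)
The plan is to prove the contraction by passing to squared norms. Abbreviate $\Delta := W - W^{\ast}$ and $g(\cdot) := \nabla_W L_{\mathrm{total}}(\cdot)$; the case $\Delta = 0$ is trivial, so assume $\Delta \neq 0$. Expanding,
\[
\| \Delta - \eta\big(g(W) - g(W^{\ast})\big) \|^{2}
= \|\Delta\|^{2} - 2\eta\,\langle \Delta,\, g(W) - g(W^{\ast})\rangle + \eta^{2}\,\|g(W) - g(W^{\ast})\|^{2},
\]
so the whole argument reduces to lower-bounding the inner-product term and upper-bounding the last term using the two properties in Assumption~\ref{assumption4}.

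First I would apply the strong-monotonicity inequality $\langle g(W) - g(W^{\ast}),\, \Delta\rangle \geq \mu_g\|\Delta\|^{2}$, which controls the cross term as $-2\eta\langle \Delta, g(W) - g(W^{\ast})\rangle \leq -2\eta\mu_g\|\Delta\|^{2}$. Next I would use $L_g$-Lipschitz continuity of $g$ to bound $\|g(W) - g(W^{\ast})\|^{2} \leq L_g^{2}\|\Delta\|^{2}$. Substituting both estimates gives
\[
\| \Delta - \eta\big(g(W) - g(W^{\ast})\big) \|^{2} \leq \big(1 - 2\eta\mu_g + \eta^{2}L_g^{2}\big)\|\Delta\|^{2} = \rho^{2}\|\Delta\|^{2},
\]
and taking the nonnegative square root yields the asserted inequality.

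It remains to verify $0 < \rho < 1$. Viewing $q(\eta) := \eta^{2}L_g^{2} - 2\eta\mu_g + 1$ as a quadratic in $\eta$, its discriminant is $4(\mu_g^{2} - L_g^{2}) \leq 0$ since $\mu_g \leq L_g$ for any $\mu_g$-strongly convex, $L_g$-smooth function; hence $q(\eta) \geq 0$ always, $\rho = \sqrt{q(\eta)}$ is well defined, and $\rho > 0$ except in the degenerate case $\mu_g = L_g,\ \eta = 1/L_g$. Strict contraction requires $q(\eta) < 1$, i.e. $\eta\big(\eta L_g^{2} - 2\mu_g\big) < 0$, which holds on $0 < \eta < 2\mu_g/L_g^{2}$; since $2\mu_g/L_g^{2} \leq 2/L_g$, this is consistent with the stated step-size window. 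This last point is the only place that requires care: the elementary bounds above only deliver $\rho < 1$ on $(0,\,2\mu_g/L_g^{2})$, so to reach the full advertised interval $\eta \in (0,\,2/L_g)$ one should sharpen the bound on $\|g(W) - g(W^{\ast})\|^{2}$ via the co-coercivity estimate $\langle g(W) - g(W^{\ast}),\, \Delta\rangle \geq \tfrac{1}{L_g}\|g(W) - g(W^{\ast})\|^{2}$ (valid for smooth convex $L_{\mathrm{total}}$) combined with strong convexity, which tightens the contraction factor to $\max\{|1 - \eta\mu_g|,\,|1 - \eta L_g|\} < 1$ over all of $(0,\,2/L_g)$. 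Everything else in the argument is routine.
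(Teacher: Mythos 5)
Your proposal is correct and follows essentially the same route as the paper: expand the squared norm, lower-bound the cross term by strong monotonicity ($\mu_g$), upper-bound the gradient difference by $L_g$-Lipschitz continuity, and take square roots to obtain $\rho=\sqrt{1-2\eta\mu_g+\eta^2L_g^2}$.

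Your closing remark about the step-size window is the one substantive point, and it is well taken: with this bound, $\rho<1$ holds only for $0<\eta<2\mu_g/L_g^2$, not on the whole advertised interval $(0,2/L_g)$ (e.g.\ $\eta=1/L_g$ with $\mu_g<L_g/2$ gives $\rho^2=2-2\eta\mu_g>1$). The paper's own proof of the lemma simply asserts $\rho<1$ without checking it, and the condition is quietly corrected in Theorem~\ref{theorem2}, which assumes exactly $0<\eta<2\mu_g/L_g^2$; so the lemma's hypothesis should read $\eta<2\mu_g/L_g^2$ for the stated $\rho$. Your alternative repair via co-coercivity, yielding the factor $\max\{|1-\eta\mu_g|,\,|1-\eta L_g|\}<1$ on all of $(0,2/L_g)$, is valid but changes the value of $\rho$ and would require propagating that different factor into Theorem~\ref{theorem2}; the simpler fix consistent with the rest of the paper is to tighten the step-size condition in the lemma.
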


\begin{proof}
	Expanding the squared norm:
	\[
	\begin{aligned}
		&\| W - W^\ast - \eta (\nabla_W L_{\mathrm{total}}(W) - \nabla_W L_{\mathrm{total}}(W^\ast)) \|^2 \\
		&= \|W - W^\ast\|^2 
		- 2\eta \langle \nabla_W L_{\mathrm{total}}(W) - \nabla_W L_{\mathrm{total}}(W^\ast),\, W - W^\ast \rangle \\
		&\quad + \eta^2 \|\nabla_W L_{\mathrm{total}}(W) - \nabla_W L_{\mathrm{total}}(W^\ast)\|^2.
	\end{aligned}
	\]
	By convexity and $L_g$-smoothness:
	\[
	\langle \nabla_W L_{\mathrm{total}}(W) - \nabla_W L_{\mathrm{total}}(W^\ast),\, W - W^\ast \rangle 
	\ge \mu_g \|W - W^\ast\|^2,
	\]
	\[
	\|\nabla_W L_{\mathrm{total}}(W) - \nabla_W L_{\mathrm{total}}(W^\ast)\| 
	\le L_g \|W - W^\ast\|.
	\]
	Substituting these bounds into the expansion yields:
	\[
	\begin{aligned}
		&\| W - W^\ast - \eta ( \nabla_W L_{\mathrm{total}}(W) - \nabla_W L_{\mathrm{total}}(W^\ast) ) \|^2 \\
		&\quad \le \| W - W^\ast \|^2 - 2\eta \mu_g \| W - W^\ast \|^2 + \eta^2 L_g^2 \| W - W^\ast \|^2 \\
		&\quad = \left( 1 - 2\eta\mu_g + \eta^2 L_g^2 \right) \| W - W^\ast \|^2.
	\end{aligned}
	\]
	Taking square roots gives the contraction inequality:
	\[
	\| W - W^\ast - \eta ( \nabla_W L_{\mathrm{total}}(W) - \nabla_W L_{\mathrm{total}}(W^\ast) ) \| 
	\le \rho \| W - W^\ast \|,
	\]
	with $\rho = \sqrt{1 - 2\eta\mu_g + \eta^2 L_g^2} < 1$.
\end{proof}

\begin{theorem}
	\label{theorem2}
	\textbf{(Bounded Weight Deviation).}
	Let $\widetilde{W}_k = W_k - W^\ast$ denote the weight deviation at iteration $k$. 
	Suppose Assumption~\ref{assumption4} holds on a neighborhood $\Omega$ of $W^\ast$, the iterates satisfy $W_k \in \Omega$ for all $k$, the ideal weights satisfy $\|W^\ast\| \le M$, and the gradient mismatch at $W^\ast$ is bounded by $\|\nabla_W L_{\mathrm{total}}(W^\ast)\| \le \zeta$.
	If $0 < \eta < \tfrac{2\mu_g}{L_g^2}$, then the PINN weight deviation obeys:
	\[
	\|\widetilde{W}_k\| \le \rho^k \|\widetilde{W}_0\| + \frac{\eta\zeta}{1 - \rho}, 
	\qquad \rho := \sqrt{1 - 2\eta\mu_g + \eta^2 L_g^2} \in (0,1).
	\]
	In particular, $\{\widetilde{W}_k\}$ is uniformly bounded for all $k \ge 0$, and the bound decays geometrically to a neighborhood of radius $\frac{\eta\zeta}{1 - \rho}$.
\end{theorem}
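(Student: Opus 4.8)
The plan is to treat the gradient-descent recursion \eqref{eq:weight_update} as a perturbed contraction in the error variable $\widetilde W_k = W_k - W^\ast$ and then unroll it into a geometric series. First I would substitute the update rule $W_{k+1} = W_k - \eta\nabla_W L_{\mathrm{total}}(W_k)$ into the definition of $\widetilde W_{k+1}$ to get $\widetilde W_{k+1} = \widetilde W_k - \eta\nabla_W L_{\mathrm{total}}(W_k)$, and then add and subtract $\eta\nabla_W L_{\mathrm{total}}(W^\ast)$ so as to isolate the term to which Lemma~\ref{lemma1} applies:
\[
\widetilde W_{k+1} = \bigl(\widetilde W_k - \eta(\nabla_W L_{\mathrm{total}}(W_k) - \nabla_W L_{\mathrm{total}}(W^\ast))\bigr) - \eta\,\nabla_W L_{\mathrm{total}}(W^\ast).
\]

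Second, I would apply the triangle inequality, invoke the contraction estimate of Lemma~\ref{lemma1} on the first bracket — legitimate because $W_k \in \Omega$ by hypothesis — and the bound $\|\nabla_W L_{\mathrm{total}}(W^\ast)\| \le \zeta$ on the second, obtaining the one-step recursion $\|\widetilde W_{k+1}\| \le \rho\,\|\widetilde W_k\| + \eta\zeta$. Before that I would verify that the stated step-size window $0<\eta<2\mu_g/L_g^2$ forces $\rho=\sqrt{1-2\eta\mu_g+\eta^2 L_g^2}\in(0,1)$: writing $\rho^2 = 1 - \eta(2\mu_g - \eta L_g^2)$ shows $\rho^2<1$ exactly on this interval, while $\rho^2>0$ holds always since the quadratic $1-2\eta\mu_g+\eta^2 L_g^2$ in $\eta$ has non-positive discriminant $4(\mu_g^2-L_g^2)\le 0$ (using $\mu_g\le L_g$). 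Note this window is contained in the $\eta<2/L_g$ range required by Lemma~\ref{lemma1}, so the lemma is applicable at every iteration.

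Third, I would iterate the one-step recursion from $k$ down to $0$:
\[
\|\widetilde W_k\| \le \rho^k \|\widetilde W_0\| + \eta\zeta \sum_{j=0}^{k-1}\rho^j \le \rho^k\|\widetilde W_0\| + \frac{\eta\zeta}{1-\rho},
\]
where the last step bounds the finite geometric sum by its infinite limit using $\rho\in(0,1)$. Uniform boundedness for all $k\ge 0$ is then immediate since $\rho^k\|\widetilde W_0\|\le\|\widetilde W_0\|$, the first term decays geometrically, and only the residual ball of radius $\eta\zeta/(1-\rho)$ remains; combining with $\|W^\ast\|\le M$ via $\|W_k\|\le\|\widetilde W_k\|+M$ then yields a uniform bound on the weights themselves.

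I do not anticipate a deep obstacle: this is the textbook ``contraction plus bounded perturbation'' pattern. The two points I would state explicitly rather than gloss over are (i) invariance of $\Omega$ — that the iterates remain in the region where Assumption~\ref{assumption4}, and hence Lemma~\ref{lemma1}, are valid; this is assumed in the statement, but a fuller treatment would check that the residual ball of radius $\eta\zeta/(1-\rho)$ lies inside $\Omega$ so that the hypothesis is self-consistent — and (ii) keeping track of the origin of $\zeta$ as the mismatch between the idealized-loss gradient and what the online mini-batch updates actually compute, since it is precisely this nonzero mismatch that prevents exact convergence to $W^\ast$ and produces the non-vanishing radius.
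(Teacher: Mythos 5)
Your proposal is correct and follows essentially the same route as the paper's proof: the same decomposition of $\widetilde W_{k+1}$ into a contraction term handled by Lemma~\ref{lemma1} plus the bounded perturbation $\eta\nabla_W L_{\mathrm{total}}(W^\ast)$, the same one-step recursion $\|\widetilde W_{k+1}\|\le\rho\|\widetilde W_k\|+\eta\zeta$, and the same geometric unrolling. Your added checks (that $\eta<2\mu_g/L_g^2$ implies $\rho\in(0,1)$ and lies within the range of Lemma~\ref{lemma1}, and the remark on $\Omega$-invariance) are sound refinements of what the paper states more briefly.
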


\begin{proof}
	From the weight update law,
	\[
	\widetilde{W}_{k+1} = \widetilde{W}_k - \eta\big(\nabla_W L_{\mathrm{total}}(W_k) - \nabla_W L_{\mathrm{total}}(W^\ast)\big) - \eta\, \nabla_W L_{\mathrm{total}}(W^\ast),
	\]
	take norms and apply the triangle inequality together with Lemma~\ref{lemma1} (valid on $\Omega$):
	\[ \resizebox{\columnwidth}{!}{$
	\|\widetilde{W}_{k+1}\| 
	\le \big\| \widetilde{W}_k - \eta\big(\nabla_W L_{\mathrm{total}}(W_k) - \nabla_W L_{\mathrm{total}}(W^\ast)\big) \big\|
	+ \eta \|\nabla_W L_{\mathrm{total}}(W^\ast)\|
	\le \rho\, \|\widetilde{W}_k\| + \eta\zeta. $}
	\]

	Solving the linear recurrence gives
	\[
	\|\widetilde{W}_k\| \le \rho^k \|\widetilde{W}_0\| + \eta\zeta \sum_{i=0}^{k-1} \rho^i 
	= \rho^k \|\widetilde{W}_0\| + \frac{\eta\zeta\,(1 - \rho^k)}{1 - \rho}
	\le \rho^k \|\widetilde{W}_0\| + \frac{\eta\zeta}{1 - \rho}.
	\]
	Finally, $\rho<1$ holds because $0<\eta<\tfrac{2\mu_g}{L_g^2}$ implies $1 - 2\eta\mu_g + \eta^2 L_g^2 < 1$.

\end{proof}

\begin{corollary}
	The actual weights remain bounded:
	\[
	\|W_k\| \le M + \rho^k \|\widetilde{W}_{0}\| + \frac{\eta \zeta}{1 - \rho}
	\qquad \forall\, k \ge 0.
	\]
\end{corollary}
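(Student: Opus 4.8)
The plan is to obtain this bound as an immediate consequence of Theorem~\ref{theorem2} via a single triangle-inequality step; all of the substantive work — the contraction estimate of Lemma~\ref{lemma1}, the summation of the geometric series in the weight recurrence, and the parameter condition $0<\eta<2\mu_g/L_g^2$ that forces $\rho\in(0,1)$ — has already been carried out, so nothing new needs to be re-derived.

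First I would use the definition $\widetilde{W}_k = W_k - W^\ast$ to write $W_k = W^\ast + \widetilde{W}_k$, then take norms and apply the triangle inequality to get $\|W_k\| \le \|W^\ast\| + \|\widetilde{W}_k\|$. Next I would substitute the two bounds already in hand: the standing hypothesis $\|W^\ast\| \le M$ and the weight-deviation estimate $\|\widetilde{W}_k\| \le \rho^k\|\widetilde{W}_0\| + \eta\zeta/(1-\rho)$ established in Theorem~\ref{theorem2}. Adding these gives $\|W_k\| \le M + \rho^k\|\widetilde{W}_0\| + \eta\zeta/(1-\rho)$ for every $k\ge 0$, which is exactly the asserted inequality; letting $k\to\infty$ one further reads off the asymptotic bound $\limsup_{k\to\infty}\|W_k\| \le M + \eta\zeta/(1-\rho)$.

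There is no real obstacle here — the argument is routine — but the one point worth stating explicitly is that the corollary inherits the full set of hypotheses of Theorem~\ref{theorem2}, in particular that the iterates $W_k$ stay in the compact neighborhood $\Omega$ on which Assumption~\ref{assumption4} (local convexity and $L_g$-smoothness) holds, so that the contraction of Lemma~\ref{lemma1} is valid at each step. Granting that, the bound shows the PINN weights remain uniformly bounded and converge geometrically to a ball of radius $\eta\zeta/(1-\rho)$ about $W^\ast$, which keeps the fault estimate $\hat{\phi}(t)$ well-defined and bounded and thereby closes the loop with the uniform ultimate boundedness argument of the preceding theorem.
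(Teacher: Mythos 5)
Your proof is correct and is exactly the argument the paper intends: write $W_k = W^\ast + \widetilde{W}_k$, apply the triangle inequality, and insert the bound $\|W^\ast\|\le M$ together with the deviation estimate of Theorem~\ref{theorem2}. Your explicit remark that the corollary inherits the hypotheses of Theorem~\ref{theorem2} (in particular $W_k\in\Omega$) is a fair point but does not change the argument.
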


\section{Simulation Results}
\label{sec:results}

To evaluate the effectiveness of the proposed SMO-PINN-SMC framework, we conducted simulation studies using real operational data collected from the Parehsar combined cycle power plant. The objective is to validate the integrated performance of the PINN for fault estimation, the SMO for state reconstruction, and the SMC for robust tracking under actuator degradation.

All dynamic models, estimators, and controllers were implemented in
Python and tested directly on measurement data from a functioning HRSG
superheater unit. The results highlight key aspects of the system performance
under fault-tolerant control.

\subsection{Validation Using Continuous Maintenance Data}
\label{subsec:maintenance_validation}

To validate the fault estimation performance, we leverage continuous valve position measurements from maintenance logs at Pareh-sar power plant during a documented 1-hour valve degradation event. Using high-frequency position sensors, we calculate the true fault level in real-time:

\begin{equation*}
	\phi_{\text{true}}(t) = 1 - \frac{u_{\text{actual}}(t)}{u_{\text{cmd}}(t)}
	\label{eq:true_fault}
\end{equation*}

where:
\begin{itemize}
	\item $u_{\text{cmd}}(t)$: Valve position commanded by DCS 
	\item $u_{\text{actual}}(t)$: Actual valve position measured by maintenance sensors 
\end{itemize}

Fig. \ref{fig:phi_hat} shows the direct comparison between the PINN estimate $\hat{\phi}(t)$ and the continuously measured $\phi_{\text{true}}(t)$.

The key Observations are:
\begin{enumerate}
	\item \textit{Transient Tracking}: The PINN accurately captures rapid fault development.
	\item \textit{Steady-State Accuracy}: During stable periods, estimation error remains small.
	\item \textit{Noise Rejection}: High-frequency sensor noise in $u_{\text{actual}}$ is effectively filtered while preserving fault dynamics
	\item \textit{Control Impact}: The controller maintained the steam temperature within $\pm 1^{\circ}\text{C}$ despite $\phi$ changes.
\end{enumerate}

This continuous validation demonstrates the PINN's capability to quantify actuator degradation with precision sufficient for:
\begin{itemize}
	\item Automated fault compensation without operator intervention
	\item Real-time health monitoring of critical valves
\end{itemize}

Fig.~\ref{fig:phi_hat} shows the estimated actuator fault \( \hat{\phi}(t) \) produced by the PINN. The estimation remains smooth and bounded throughout the operation and accurately captures the time-varying nature of the fault. By incorporating the governing physics into its loss function, the PINN maintains dynamic consistency while providing real-time estimates of the actuator effectiveness.

Fig.~\ref{fig:SMO} presents the state estimation results for the unmeasured variable \( x_2(t) \), representing the steam temperature within the attemperator section. Despite the absence of direct measurements, the SMO effectively reconstructs \( \hat{x}_2(t) \), which is essential for both control synthesis and fault inference.

The accuracy of this state reconstruction is quantified in Fig.~\ref{fig:est_error}, which displays the absolute estimation error \( |x_2(t) - \hat{x}_2(t)| \). The error converges rapidly to a small neighborhood around zero and remains bounded throughout, even under active control and varying fault conditions. This behavior aligns with the theoretical stability guarantees established in the previous section.

Fig.~\ref{fig:SMC_output} compares the system output temperature \( x_1(t) \) with the desired reference trajectory \( x_1^{\text{ref}}(t) \), and also includes the baseline PID response for comparison. The proposed SMC ensures tracking after a brief transient, with bounded fluctuations that reflect the expected chattering behavior of sliding mode control. The results confirm the controller's resilience to fault-induced degradation.

The tracking performance is further highlighted in Fig.~\ref{fig:tracking_error}, which shows the tracking error \( e(t) = x_1(t) - x_1^{\text{ref}}(t) \). The error decreases sharply and remains confined within a narrow boundary, validating the stability and precision of the closed-loop system.

Fig.~\ref{fig:control_input} compares the control inputs $u(t)$ from the proposed SMC-based controller and the conventional PID controller. The PID controller parameters used for the comparison were tuned based on the HRSG superheater dynamics: proportional gain \(K_p = 1.5\), integral gain \(K_i = 96\), and derivative gain \(K_d = 0\) (derivative action was not used). These values were selected to balance responsiveness and stability under normal operating conditions.
 The SMC input exhibits sharp, event-triggered activations in response to the fault, resulting in a more aggressive valve opening to compensate for the loss of effectiveness. In contrast, the PID controller reacts slowly, treating the actuator as healthy and failing to adjust adequately to the fault. Since the SMC strategy is fault-tolerant, it intentionally increases the control input to open the valve further and counteract the reduced effectiveness of the actuator. This difference in control behavior leads to superior fault compensation under SMC, allowing the steam temperature to fluctuate around the desired value. In contrast, the PID case shows noticeable overshoot and a prolonged settling time before reaching the setpoint.

Collectively, these results demonstrate that the proposed architecture achieves reliable fault estimation, accurate state reconstruction, and robust output tracking under partial actuator failure. The combination of PINN-based learning and model-based control enables real-time adaptability without requiring full state measurement or fault instrumentation.

\begin{figure}[h!]
	\centering
	\includegraphics[width=0.8\columnwidth]{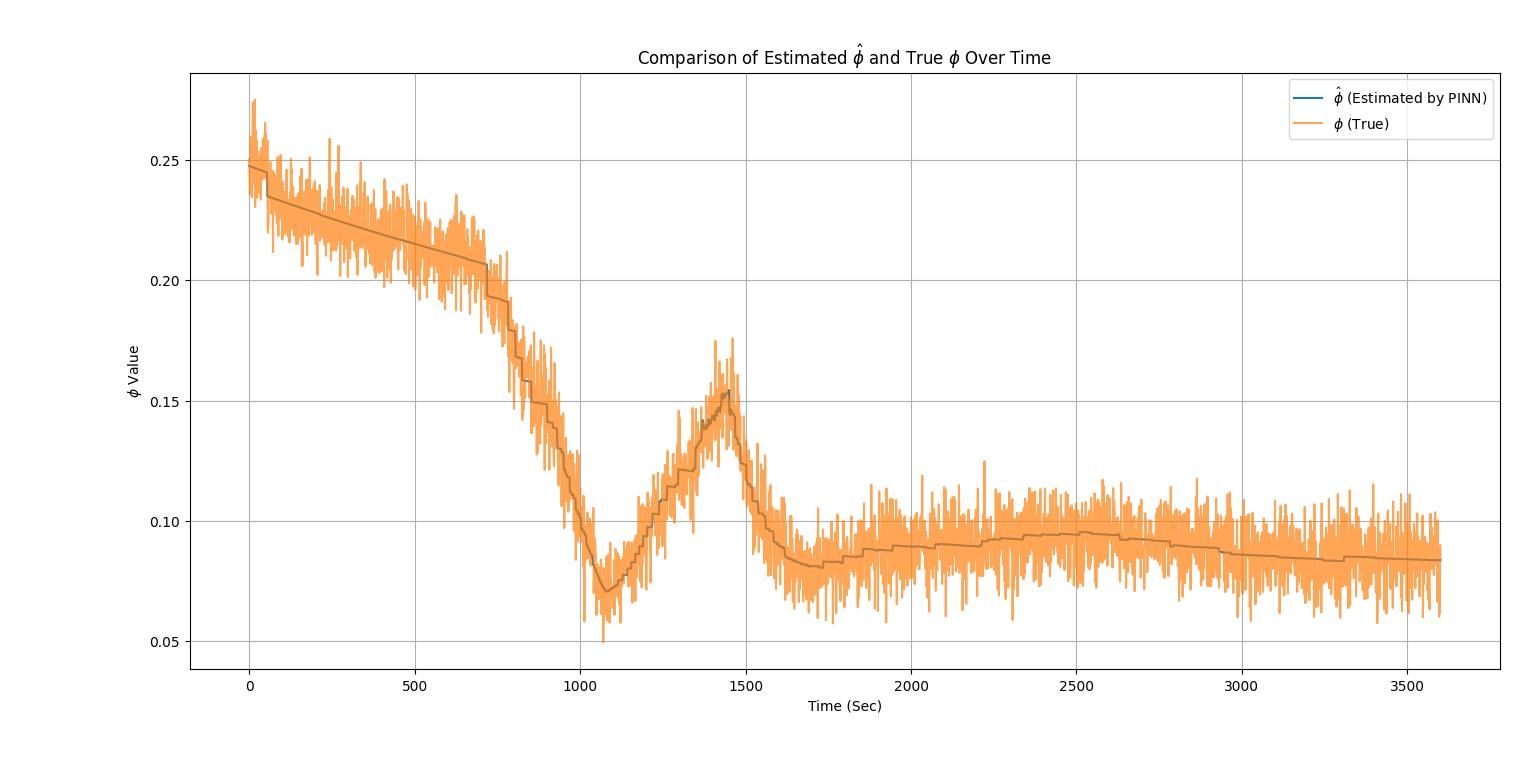}
	\caption{PINN Estimated Fault \(\hat{\phi}(t)\)}
	\label{fig:phi_hat}
\end{figure}

\begin{figure}[h!]
	\centering
	\includegraphics[width=0.8\columnwidth]{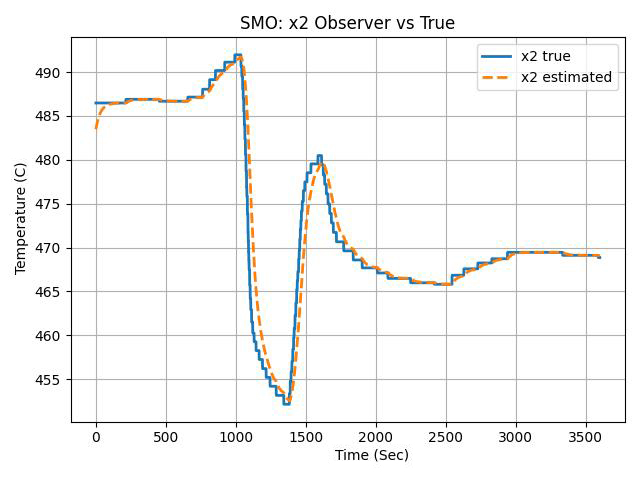}
	\caption{SMO: Estimated State \(\hat{x}_2(t)\) vs real \(x_2(t)\)}
	\label{fig:SMO}
\end{figure}

\begin{figure}[h!]
	\centering
	\includegraphics[width=0.8\columnwidth]{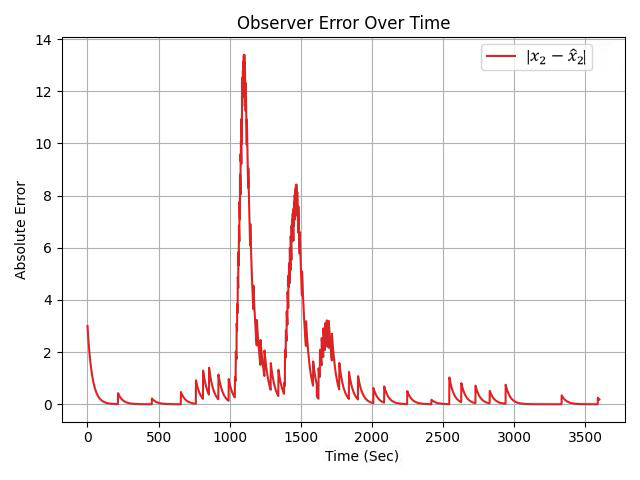}
	\caption{Observer Estimation Error \( |x_2 - \hat{x}_2| \)}
	\label{fig:est_error}
\end{figure}

\begin{figure}[h!]
	\centering
	\includegraphics[width=0.8\columnwidth]{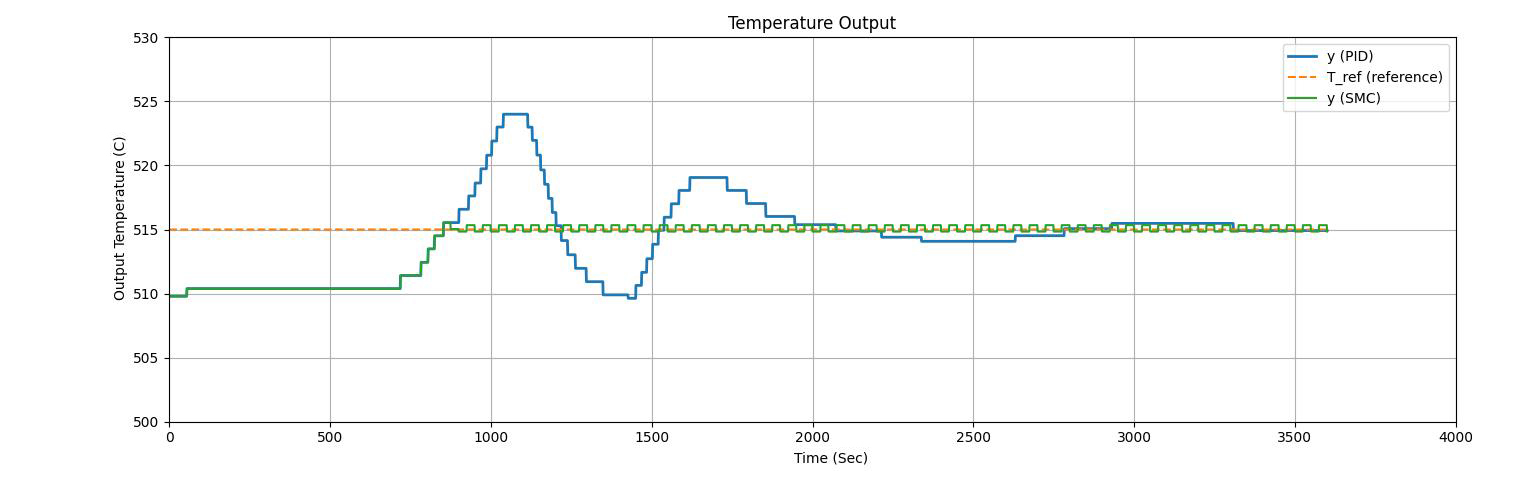}
	\caption{Controlled Output Temperature via SMC}
	\label{fig:SMC_output}
\end{figure}

\begin{figure}[h!]
	\centering
	\includegraphics[width=0.8\columnwidth]{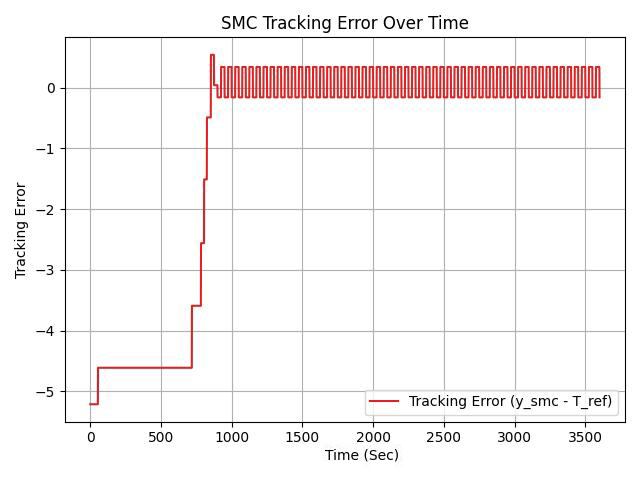}
	\caption{Tracking Error \( e(t) = x_1 - x_1^{\text{ref}} \)}
	\label{fig:tracking_error}
\end{figure}

\begin{figure}[h!]
	\centering
	\includegraphics[width=1\columnwidth]{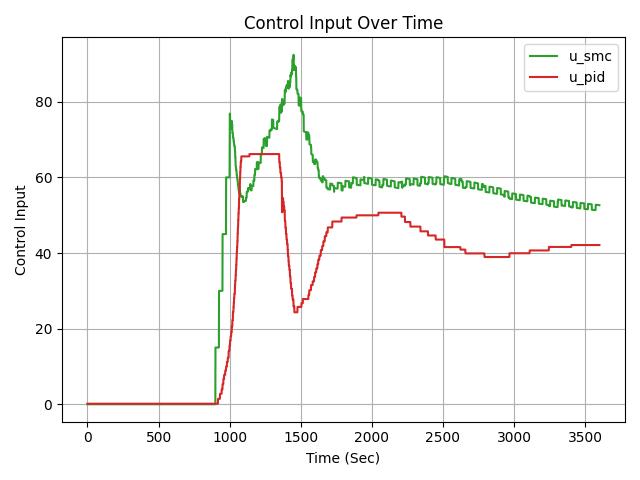}
	\caption{Control Input \( u(t) \) from the SMC compared to conventional PID}
	\label{fig:control_input}
\end{figure}

\section{Conclusion}
\label{sec:conclusion}

This study proposed a fault-tolerant control framework for regulating superheater steam temperature in heat recovery steam generator systems subject to multiplicative actuator faults. The architecture integrates a sliding mode observer and a physics-informed neural network to achieve simultaneous estimation of the unmeasured internal state and time-varying actuator effectiveness, enabling real-time control adaptation without direct fault measurements.

The actuator fault—modeled as a dynamic loss of spray valve effectiveness—was addressed using the PINN, which leveraged the system's nonlinear dynamics within a physics-informed loss function to estimate the fault signal. The SMO provided state estimation despite measurement constraints, and the combined SMO-PINN structure enabled accurate reconstruction of latent system variables and actuator behavior.

A Lyapunov-based analysis provided the uniform ultimate boundedness of the estimation and tracking errors under reasonable assumptions. The theoretical results were evaluated using simulations based on real HRSG operational data from the Parehsar power plant. The proposed controller achieved smooth input signals, robust tracking performance, and effective compensation under actuator degradation. Additionally, the modular proposed architecture is readily extendable to other nonlinear industrial systems with similar fault structures.

In summary, the SMO-PINN-SMC framework offers a practical and scalable solution for fault-tolerant temperature regulation in HRSG systems. Future research directions include handling sensor faults and deploying the method for predictive maintenance and early fault diagnosis in combined-cycle power plants.

\bibliographystyle{elsarticle-num}
\bibliography{mainrefs}

\begin{thebibliography}{10}
\expandafter\ifx\csname url\endcsname\relax
  \def\url#1{\texttt{#1}}\fi
\expandafter\ifx\csname urlprefix\endcsname\relax\def\urlprefix{URL }\fi
\expandafter\ifx\csname href\endcsname\relax
  \def\href#1#2{#2} \def\path#1{#1}\fi

\bibitem{latif2023failure}
N.~Latif, T.~Triwibowo, H.~Yuliani, V.~Astini, Failure investigation of
  superheater through investigate the nearest component, in: E3S Web of
  Conferences, Vol. 430, EDP Sciences, 2023, p. 01242.

\bibitem{sanaye2007transient}
S.~Sanaye, M.~Rezazadeh, Transient thermal modelling of heat recovery steam
  generators in combined cycle power plants, International journal of energy
  research 31~(11) (2007) 1047--1063.

\bibitem{self2018effects}
S.~J. Self, M.~A. Rosen, B.~V. Reddy, Effects of oxy-fuel combustion on
  performance of heat recovery steam generators, European Journal of
  Sustainable Development Research 2~(2) (2018) 22.

\bibitem{mcconnell2023modeling}
J.~McConnell, T.~Das, A.~Caesar, P.~Veeravalli, Modeling and simulation of a
  multistage heat recovery steam generator, Simulation 99~(2) (2023) 169--182.

\bibitem{sun2014study}
D.~D. Sun, C.~Yang, F.~Zeng, Study on gas turbine-based cchp system with
  multi-objective evaluation index, Advanced Materials Research 860 (2014)
  1366--1369.

\bibitem{mcconnell2019multi}
J.~McConnell, T.~Das, A.~Caesar, J.~Hoy, P.~Veeravalli, Multi-physics dynamic
  modeling and transient simulation of a multi-stage heat recovery steam
  generator (hrsg), in: Dynamic Systems and Control Conference, Vol. 59155,
  American Society of Mechanical Engineers, 2019, p. V002T13A002.

\bibitem{lv2019dependency}
X.~Lv, D.~Zhou, L.~Ma, Y.~Tang, Dependency model-based multiple fault diagnosis
  using knowledge of test result and fault prior probability, Applied Sciences
  9~(2) (2019) 311.

\bibitem{zhang2020distributed}
T.~Zhang, W.~Zhang, Q.~Zhao, Y.~Du, J.~Chen, J.~Zhao, Distributed real-time
  state estimation for combined heat and power systems, Journal of Modern Power
  Systems and Clean Energy 9~(2) (2020) 316--327.

\bibitem{mukhopadhyay2011failure}
G.~Mukhopadhyay, S.~Bhattacharyya, Failure analysis of an attemperator in a
  steam line of a boiler, Engineering Failure Analysis 18~(5) (2011)
  1359--1365.

\bibitem{shokouhmand2015failure}
H.~Shokouhmand, B.~Ghadimi, R.~Espanani, Failure analysis and retrofitting of
  superheater tubes in utility boiler, Engineering Failure Analysis 50 (2015)
  20--28.

\bibitem{nurbanasari2023failure}
M.~Nurbanasari, B.~Pramono, I.~G. Darmadi, I.~T. Saputro, G.~Gumilar,
  A.~Ekajati, et~al., Failure study of secondary superheater tube out header
  damage in a 600-mw coal power plant, Engineering Failure Analysis 150 (2023)
  107349.

\bibitem{kochmanski2024failure}
P.~Kochma{\'n}ski, S.~Fryska, A.~E. Kochma{\'n}ska, Failure analysis of steam
  superheater boiler tube made of astm t22 steel, Engineering Failure Analysis
  162 (2024) 108366.

\bibitem{gibbons_how_2025}
J.~Gibbons, K.~Mathews, How {Desuperheater} {Nozzle} {Testing} {Prevents} or
  {Predicts} {Failures}, automation.com, A subsidiary of the International
  Society of Automation (ISA) 4 (May 2025).

\bibitem{abdin2022state}
Z.~U. Abdin, A.~Rachid, State observer for water-based hybrid pv/t system with
  unknown input, in: IECON 2022--48th Annual Conference of the IEEE Industrial
  Electronics Society, IEEE, 2022, pp. 1--6.

\bibitem{inyang2021health}
U.~Inyang, I.~Petrunin, I.~Jennions, Health condition estimation of bearings
  with multiple faults by a composite learning-based approach, Sensors 21~(13)
  (2021) 4424.

\bibitem{zhang2024sliding}
R.~Zhang, P.~Li, W.~Liang, Sliding-mode observer-based fault diagnosis and
  fault-tolerant control of the main drive system of rolling mill, Transactions
  of the Institute of Measurement and Control 46~(7) (2024) 1274--1282.

\bibitem{tang2023data}
W.~Tang, Data-driven state observation for nonlinear systems based on online
  learning, AIChE Journal 69~(12) (2023) e18224.

\bibitem{zhou2023virtual}
H.~Zhou, Virtual reference setting and its application in thermal power unit
  control, in: Journal of Physics: Conference Series, Vol. 2528, IOP
  Publishing, 2023, p. 012071.

\bibitem{li2017superheat}
N.~Li, X.~Wang, Superheat control of evaporator outlet in air conditioning, in:
  2017 2nd Joint International Information Technology, Mechanical and
  Electronic Engineering Conference (JIMEC 2017), Atlantis Press, 2017, pp.
  391--395.

\bibitem{chen2022phase}
Z.~Chen, Y.-S. Hao, L.~Sun, Z.-g. Su, Phase compensation based active
  disturbance rejection control for high order superheated steam temperature
  system, Control Engineering Practice 126 (2022) 105200.

\bibitem{wu2019superheated}
Z.~Wu, T.~He, D.~Li, Y.~Xue, L.~Sun, L.~Sun, Superheated steam temperature
  control based on modified active disturbance rejection control, Control
  Engineering Practice 83 (2019) 83--97.

\bibitem{alamoodi2017nonlinear}
N.~Alamoodi, P.~Daoutidis, Nonlinear control of coal-fired steam power plants,
  Control Engineering Practice 60 (2017) 63--75.

\bibitem{wang2018fixed}
Z.~Wang, Y.~Bai, J.~Xie, Z.~Li, C.~Ma, P.~Liu, Y.~Zhang, Fixed-time
  sliding-mode fault-tolerant control of waste heat power generator systems,
  Complexity 2018~(1) (2018) 3580628.

\bibitem{wang2019evaluation}
Y.~Wang, D.~Bhattacharyya, R.~Turton, Evaluation of novel configurations of
  natural gas combined cycle (ngcc) power plants for load-following operation
  using dynamic modeling and optimization, Energy \& Fuels 34~(1) (2019)
  1053--1070.

\bibitem{vcehil2017novel}
M.~{\v{C}}ehil, S.~Katuli{\'c}, D.~R. Schneider, Novel method for determining
  optimal heat-exchanger layout for heat recovery steam generators, Energy
  conversion and management 149 (2017) 851--859.

\bibitem{guo2023nonsingular}
X.~Guo, X.~Liao, Q.~Wang, Y.~Liu, Nonsingular terminal sliding mode control
  with sliding perturbation observer for a permanent-magnet spherical actuator,
  Proceedings of the Institution of Mechanical Engineers, Part I: Journal of
  Systems and Control Engineering 237~(2) (2023) 259--271.

\bibitem{li2023physics}
H.~Li, L.~Gou, H.~Li, Z.~Liu, Physics-guided neural network model for
  aeroengine control system sensor fault diagnosis under dynamic conditions,
  Aerospace 10~(7) (2023) 644.

\bibitem{retzler2024learning}
A.~Retzler, R.~T{\'o}th, M.~Schoukens, G.~I. Beintema, J.~Weigand, J.-P.
  No{\"e}l, Z.~Koll{\'a}r, J.~Swevers, Learning-based augmentation of
  physics-based models: an industrial robot use case, Data-Centric Engineering
  5 (2024) e12.

\bibitem{jenkins2018convergence}
B.~M. Jenkins, A.~M. Annaswamy, E.~Lavretsky, T.~E. Gibson, Convergence
  properties of adaptive systems and the definition of exponential stability,
  SIAM journal on control and optimization 56~(4) (2018) 2463--2484.

\bibitem{bolderman2024physics}
M.~Bolderman, H.~Butler, S.~Koekebakker, E.~Van~Horssen, R.~Kamidi,
  T.~Spaan-Burke, N.~Strijbosch, M.~Lazar, Physics-guided neural networks for
  feedforward control with input-to-state-stability guarantees, Control
  Engineering Practice 145 (2024) 105851.

\bibitem{van2020noisy}
H.~J. Van~Waarde, M.~K. Camlibel, M.~Mesbahi, From noisy data to feedback
  controllers: Nonconservative design via a matrix s-lemma, IEEE Transactions
  on Automatic Control 67~(1) (2020) 162--175.

\bibitem{sabounchi2021fltrl}
M.~Sabounchi, J.~Wei-Kocsis, Fltrl: a fuzzy-logic transfer learning powered
  reinforcement learning method for intelligent online control in power
  systems, in: North American Fuzzy Information Processing Society Annual
  Conference, Springer, 2021, pp. 368--379.

\bibitem{dijoux2022experimental}
E.~Dijoux, N.~Y. Steiner, M.~Benne, M.-C. Pera, B.~Grondin-Perez, Experimental
  validation of an active fault tolerant control strategy applied to a proton
  exchange membrane fuel cell, Electrochem 3~(4) (2022) 633--652.

\bibitem{adumene2015performance}
S.~Adumene, B.~T. Lebele-Alawa, et~al., Performance optimization of dual
  pressure heat recovery steam generator (hrsg) in the tropical rainforest,
  Engineering 7~(06) (2015) 347.

\bibitem{kaviri2012modeling}
G.~Kaviri, M.~Mohd~Jafar, M.~Tholudin, Modeling and optimization of heat
  recovery heat exchanger, Applied Mechanics and Materials 110 (2012)
  2448--2452.

\bibitem{fanoodi2025pinn}
M.~Fanoodi, F.~Abdollahi, M.~A. Shoorehdeli, M.~Maboodi,
  \href{http://arxiv.org/abs/2512.00990}{Fault-{Tolerant} {Temperature}
  {Control} of {HRSG} {Superheaters}: {Stability} {Analysis} {Under} {Valve}
  {Leakage} {Using} {Physics}-{Informed} {Neural} {Networks}}, arXiv:2512.00990
  [eess] (Nov. 2025).
\newblock \href {https://doi.org/10.48550/arXiv.2512.00990}
  {\path{doi:10.48550/arXiv.2512.00990}}.
\newline\urlprefix\url{http://arxiv.org/abs/2512.00990}

\bibitem{khalil2002nonlinear}
H.~K. Khalil, J.~W. Grizzle, Nonlinear systems, Vol.~3, Prentice hall Upper
  Saddle River, NJ, 2002.

\end{thebibliography}

\end{document}